\begin{document}
\newtheorem{theorem}{Theorem}
\newtheorem{acknowledgement}[theorem]{Acknowledgement}
\newtheorem{axiom}[theorem]{Axiom}
\newtheorem{case}[theorem]{Case}
\newtheorem{claim}[theorem]{Claim}
\newtheorem{conclusion}[theorem]{Conclusion}
\newtheorem{condition}[theorem]{Condition}
\newtheorem{conjecture}[theorem]{Conjecture}
\newtheorem{criterion}[theorem]{Criterion}
\newtheorem{definition}{Definition}
\newtheorem{exercise}[theorem]{Exercise}
\newtheorem{lemma}{Lemma}
\newtheorem{corollary}{Corollary}
\newtheorem{notation}[theorem]{Notation}
\newtheorem{problem}[theorem]{Problem}
\newtheorem{proposition}{Proposition}
\newtheorem{solution}[theorem]{Solution}
\newtheorem{summary}[theorem]{Summary}
\newtheorem{assumption}{Assumption}
\newtheorem{example}{\bf Example}
\newtheorem{remark}{\bf Remark}

\def\qed{$\Box$}
\def\QED{\mbox{\phantom{m}}\nolinebreak\hfill$\,\Box$}
\def\proof{\noindent{\emph{Proof:} }}
\def\poof{\noindent{\emph{Sketch of Proof:} }}
\def
\endproof{\hspace*{\fill}~\qed
\par
\endtrivlist\unskip}
\def\endproof{\hspace*{\fill}~\qed\par\endtrivlist\vskip3pt}

\def\E{\mathsf{E}}
\def\eps{\varepsilon}
\def\phi{\varphi}
\def\Lsp{{\boldsymbol L}}
\def\Bsp{{\boldsymbol B}}
\def\lsp{{\boldsymbol\ell}}
\def\Ltsp{{\Lsp^2}}
\def\Lpsp{{\Lsp^p}}
\def\Linsp{{\Lsp^{\infty}}}
\def\LtR{{\Lsp^2(\Rst)}}
\def\ltZ{{\lsp^2(\Zst)}}
\def\ltsp{{\lsp^2}}
\def\ltZt{{\lsp^2(\Zst^{2})}}
\def\ninN{{n{\in}\Nst}}
\def\oh{{\frac{1}{2}}}
\def\grass{{\cal G}}
\def\ord{{\cal O}}
\def\dist{{d_G}}
\def\conj#1{{\overline#1}}
\def\ntoinf{{n \rightarrow \infty }}
\def\toinf{{\rightarrow \infty }}
\def\tozero{{\rightarrow 0 }}
\def\trace{{\operatorname{trace}}}
\def\ord{{\cal O}}
\def\UU{{\cal U}}
\def\rank{{\operatorname{rank}}}
\def\acos{{\operatorname{acos}}}

\def\SINR{\mathsf{SINR}}
\def\SNR{\mathsf{SNR}}
\def\SIR{\mathsf{SIR}}
\def\tSIR{\widetilde{\mathsf{SIR}}}
\def\Ei{\mathsf{Ei}}
\def\l{\left}
\def\r{\right}
\def\({\left(}
\def\){\right)}
\def\lb{\left\{}
\def\rb{\right\}}

\setcounter{page}{1}

\newcommand{\eref}[1]{(\ref{#1})}
\newcommand{\fig}[1]{Fig.\ \ref{#1}}

\def\bydef{:=}
\def\ba{{\mathbf{a}}}
\def\bb{{\mathbf{b}}}
\def\bc{{\mathbf{c}}}
\def\bd{{\mathbf{d}}}
\def\bee{{\mathbf{e}}}
\def\bff{{\mathbf{f}}}
\def\bg{{\mathbf{g}}}
\def\bh{{\mathbf{h}}}
\def\bi{{\mathbf{i}}}
\def\bj{{\mathbf{j}}}
\def\bk{{\mathbf{k}}}
\def\bl{{\mathbf{l}}}
\def\bm{{\mathbf{m}}}
\def\bn{{\mathbf{n}}}
\def\bo{{\mathbf{o}}}
\def\bp{{\mathbf{p}}}
\def\bq{{\mathbf{q}}}
\def\br{{\mathbf{r}}}
\def\bs{{\mathbf{s}}}
\def\bt{{\mathbf{t}}}
\def\bu{{\mathbf{u}}}
\def\bv{{\mathbf{v}}}
\def\bw{{\mathbf{w}}}
\def\bx{{\mathbf{x}}}
\def\by{{\mathbf{y}}}
\def\bz{{\mathbf{z}}}
\def\b0{{\mathbf{0}}}

\def\bA{{\mathbf{A}}}
\def\bB{{\mathbf{B}}}
\def\bC{{\mathbf{C}}}
\def\bD{{\mathbf{D}}}
\def\bE{{\mathbf{E}}}
\def\bF{{\mathbf{F}}}
\def\bG{{\mathbf{G}}}
\def\bH{{\mathbf{H}}}
\def\bI{{\mathbf{I}}}
\def\bJ{{\mathbf{J}}}
\def\bK{{\mathbf{K}}}
\def\bL{{\mathbf{L}}}
\def\bM{{\mathbf{M}}}
\def\bN{{\mathbf{N}}}
\def\bO{{\mathbf{O}}}
\def\bP{{\mathbf{P}}}
\def\bQ{{\mathbf{Q}}}
\def\bR{{\mathbf{R}}}
\def\bS{{\mathbf{S}}}
\def\bT{{\mathbf{T}}}
\def\bU{{\mathbf{U}}}
\def\bV{{\mathbf{V}}}
\def\bW{{\mathbf{W}}}
\def\bX{{\mathbf{X}}}
\def\bY{{\mathbf{Y}}}
\def\bZ{{\mathbf{Z}}}

\def\mA{{\mathbb{A}}}
\def\mB{{\mathbb{B}}}
\def\mC{{\mathbb{C}}}
\def\mD{{\mathbb{D}}}
\def\mE{{\mathbb{E}}}
\def\mF{{\mathbb{F}}}
\def\mG{{\mathbb{G}}}
\def\mH{{\mathbb{H}}}
\def\mI{{\mathbb{I}}}
\def\mJ{{\mathbb{J}}}
\def\mK{{\mathbb{K}}}
\def\mL{{\mathbb{L}}}
\def\mM{{\mathbb{M}}}
\def\mN{{\mathbb{N}}}
\def\mO{{\mathbb{O}}}
\def\mP{{\mathbb{P}}}
\def\mQ{{\mathbb{Q}}}
\def\mR{{\mathbb{R}}}
\def\mS{{\mathbb{S}}}
\def\mT{{\mathbb{T}}}
\def\mU{{\mathbb{U}}}
\def\mV{{\mathbb{V}}}
\def\mW{{\mathbb{W}}}
\def\mX{{\mathbb{X}}}
\def\mY{{\mathbb{Y}}}
\def\mZ{{\mathbb{Z}}}

\def\cA{\mathcal{A}}
\def\cB{\mathcal{B}}
\def\cC{\mathcal{C}}
\def\cD{\mathcal{D}}
\def\cE{\mathcal{E}}
\def\cF{\mathcal{F}}
\def\cG{\mathcal{G}}
\def\cH{\mathcal{H}}
\def\cI{\mathcal{I}}
\def\cJ{\mathcal{J}}
\def\cK{\mathcal{K}}
\def\cL{\mathcal{L}}
\def\cM{\mathcal{M}}
\def\cN{\mathcal{N}}
\def\cO{\mathcal{O}}
\def\cP{\mathcal{P}}
\def\cQ{\mathcal{Q}}
\def\cR{\mathcal{R}}
\def\cS{\mathcal{S}}
\def\cT{\mathcal{T}}
\def\cU{\mathcal{U}}
\def\cV{\mathcal{V}}
\def\cW{\mathcal{W}}
\def\cX{\mathcal{X}}
\def\cY{\mathcal{Y}}
\def\cZ{\mathcal{Z}}
\def\cd{\mathcal{d}}
\def\Mt{M_{t}}
\def\Mr{M_{r}}
\def\O{\Omega_{M_{t}}}
\newcommand{\figref}[1]{{Fig.}~\ref{#1}}
\newcommand{\tabref}[1]{{Table}~\ref{#1}}

\newcommand{\var}{\mathsf{var}}
\newcommand{\fb}{\tx{fb}}
\newcommand{\nf}{\tx{nf}}
\newcommand{\BC}{\tx{(bc)}}
\newcommand{\MAC}{\tx{(mac)}}
\newcommand{\Pout}{p_{\mathsf{out}}}
\newcommand{\nnn}{\nn\\}
\newcommand{\FB}{\tx{FB}}
\newcommand{\TX}{\tx{TX}}
\newcommand{\RX}{\tx{RX}}
\renewcommand{\mod}{\tx{mod}}
\newcommand{\m}[1]{\mathbf{#1}}
\newcommand{\td}[1]{\tilde{#1}}
\newcommand{\sbf}[1]{\scriptsize{\textbf{#1}}}
\newcommand{\stxt}[1]{\scriptsize{\textrm{#1}}}
\newcommand{\suml}[2]{\sum\limits_{#1}^{#2}}
\newcommand{\sumlk}{\sum\limits_{k=0}^{K-1}}
\newcommand{\eqhsp}{\hspace{10 pt}}
\newcommand{\tx}[1]{\texttt{#1}}
\newcommand{\Hz}{\ \tx{Hz}}
\newcommand{\sinc}{\tx{sinc}}
\newcommand{\tr}{\mathrm{tr}}
\newcommand{\diag}{\mathrm{diag}}
\newcommand{\MAI}{\tx{MAI}}
\newcommand{\ISI}{\tx{ISI}}
\newcommand{\IBI}{\tx{IBI}}
\newcommand{\CN}{\tx{CN}}
\newcommand{\CP}{\tx{CP}}
\newcommand{\ZP}{\tx{ZP}}
\newcommand{\ZF}{\tx{ZF}}
\newcommand{\SP}{\tx{SP}}
\newcommand{\MMSE}{\tx{MMSE}}
\newcommand{\MINF}{\tx{MINF}}
\newcommand{\RC}{\tx{MP}}
\newcommand{\MBER}{\tx{MBER}}
\newcommand{\MSNR}{\tx{MSNR}}
\newcommand{\MCAP}{\tx{MCAP}}
\newcommand{\vol}{\tx{vol}}
\newcommand{\ah}{\hat{g}}
\newcommand{\tg}{\tilde{g}}
\newcommand{\teta}{\tilde{\eta}}
\newcommand{\heta}{\hat{\eta}}
\newcommand{\uh}{\m{\hat{s}}}
\newcommand{\eh}{\m{\hat{\eta}}}
\newcommand{\hv}{\m{h}}
\newcommand{\hh}{\m{\hat{h}}}
\newcommand{\Po}{P_{\mathrm{out}}}
\newcommand{\Poh}{\hat{P}_{\mathrm{out}}}
\newcommand{\Ph}{\hat{\gamma}}
\newcommand{\mat}[1]{\begin{matrix}#1\end{matrix}}
\newcommand{\ud}{^{\dagger}}
\newcommand{\C}{\mathcal{C}}
\newcommand{\nn}{\nonumber}
\newcommand{\nInf}{U\rightarrow \infty}

\addtolength{\textfloatsep}{-6mm}
\setlength{\abovecaptionskip}{-0.5pt}
\textheight=24.5cm





\setlength{\belowcaptionskip}{-15pt}

\title{\huge Multiuser Resource Allocation for \\ Mobile-Edge Computation Offloading} 
\author{\IEEEauthorblockN{Changsheng You and Kaibin Huang}
\IEEEauthorblockA{Department of Electrical and Electronic Engineering\\
The University of Hong Kong\\
Email: csyou@eee.hku.hk, huangkb@eee.hku.hk}}\vspace{-10pt}


\maketitle
\begin{abstract} 
\emph{Mobile-edge computation offloading} (MECO) offloads intensive mobile computation to clouds located at the edges of cellular networks. Thereby, MECO is envisioned as a  promising technique for  prolonging  the battery lives and enhancing the computation capacities of mobiles.  In this paper, we consider resource allocation in a  MECO system comprising multiple users that time share  a single edge cloud and have different computation loads. The optimal resource allocation is formulated as a convex optimization problem  for minimizing  the weighted sum   mobile energy consumption under constraint on computation latency  and for both the cases of infinite and finite edge cloud computation capacities. The optimal policy is proved to have a threshold-based structure with respect to a derived \emph{offloading priority function}, which  yields priorities  for users according  to their channel gains and local computing energy consumption.  As a result, users with priorities above and below  a given  threshold perform complete  and minimum offloading, respectively.  Computing the threshold requires iterative computation. To reduce the complexity, a sub-optimal resource-allocation algorithm  is proposed and shown by simulation to have close-to-optimal performance. 

\end{abstract}

\section{Introduction}

The realization of Internet of Things (IoT) will connect tens of billions of resource-limited  mobiles, e.g., mobile devices, sensors and wearable computing devices, to Internet via cellular networks. The finite battery lives and limited computation capacities of mobiles pose  challenges for designing  IoT. One promising solution is to leverage mobile-edge computing \cite{patel2014mobile} and offload intensive mobile computation to nearby clouds at the edges of cellular networks, called \emph{edge clouds},  with short latency, referred to as \emph{mobile-edge computation offloading } (MECO).  In this paper, we consider a MECO system with a single edge cloud serving multiple users and investigate the energy-efficient resource allocation.  

Mobile computation offloading (MCO) (or mobile cloud computing) has been extensively studied in  computer science, including system architectures \cite{Dinh:MCCSurver:2013}, virtual machine migration \cite{xiao2013dynamic} and server consolidation \cite{srikantaiah2008energy}.  It is commonly assumed that the implementation of MCO relies on a network architecture with a central cloud (e.g., a data center). This architecture has the drawbacks of high overhead  and long backhaul latency \cite{Ahmed2015ahmedsurvey} and will soon encounter the performance bottleneck of finite backhaul capacity in view of exponential mobile traffic  growth. These issues can be overcome by MECO based on a network architecture supporting distributed mobile-edge computing. 

Energy efficient MECO requires the joint design of MCO and wireless communication techniques. Recent years have seen research progress on this topic.  For a single-user MECO system, the optimal offloading decision policy was derived in \cite{zhang:MobileMmodel:2013} by comparing the energy consumption of optimized local computing (with variable CPU cycles) and offloading (with  variable transmission rates). This framework was  further developed in \cite{you:MCC:2015} and \cite{dynamic2016zhang} to enable adaptive offloading powered by wireless  energy transfer and energy harvesting, respectively. In \cite{xiang2014energy}, also for a single-user MECO system, dynamic offloading was integrated with adaptive  LTE/WiFi link selection. Moreover, resource allocation for MECO  has been studied for  various types of multiuser systems \cite{Barbarossa:MobileCloud:2014,zhao2015cooperative, xu2015mec}.  In \cite{Barbarossa:MobileCloud:2014}, considering a multi-cell MECO system, the radio and computation resources were jointly allocated to minimize the mobile energy consumption under offloading latency  constraints.  With the coexistence of central and edge clouds, the optimal user scheduling for offloading to different clouds was studied in \cite{zhao2015cooperative}.  In addition, the distributed  offloading for multiuser MECO was designed in \cite{xu2015mec} using  game theory for both energy-and-latency minimization.  Prior work on MECO resource allocation focuses on complex algorithmic designs and yields little insight into the optimal policy structures. In contrast, for a multiuser MECO system based on time-division multiple access (TDMA), the optimal resource-allocation policy is shown in current work to have a simple threshold-based structure with respect to a derived offloading priority function. 

Resource allocation has been widely studied   for  various types of  multiuser  communication systems, e.g., TDMA (see e.g., \cite{wang2008power}),  orthogonal frequency-division multiple access (OFDMA) (see e.g., \cite{wong1999multiuser}) and code-division multiple access (CDMA) (see e.g., \cite{oh2003optimal}).  Note that all of them only focus on the radio resource allocation. In contrast, for newly proposed MECO systems,  both the computation and radio resource allocation at edge clouds need to be jointly  optimized for the maximum mobile energy savings, which makes the algorithmic design more complex.

This paper considers a multiuser MECO system based on TDMA. Consider both the cases of infinite and finite cloud computation capacities. The optimal resource-allocation policy is derived by solving a convex optimization problem that minimizes the weighted sum mobile energy consumption. Note that the consideration of MECO simplifies the problem formulation since the long backhaul latency and heavy overhead in central clouds can be neglected.  To solve the problem, an \emph{offloading priority function} is derived  that yields priorities for users and depends on their channel gains and local computing energy consumption. Based on this, the optimal  policy is proved to have an insightful threshold-based structure that determines complete or minimum offloading for users with priorities above or below a given threshold, respectively. Moreover, to reduce the complexity for  computing the threshold, a simple sub-optimal resource-allocation algorithm is designed and shown to  have close-to-optimal performance by simulation.  

\section{System Model}\label{Sec:Sys}

Consider a multiuser MECO system shown in Fig.~\ref{Fig:Sys_Multiuser_MEC}(a) that comprises  $K$ single-antenna mobiles, indexed as $1, 2, \cdots, K$,  and one single-antenna base station (BS) that is the gateway  of an  edge cloud. Time is divided into slots each with a duration of $T$ seconds. As shown in Fig.~\ref{Fig:Sys_Multiuser_MEC}(a), each slot comprises two sequential phases for 1) mobile offloading or local computing and 2) cloud computing and downloading of computation results from the edge cloud to mobiles. Cloud computing has small latency; the downloading does not consume mobile energy and furthermore is much faster than offloading due to relative smaller sizes of computation results. For these reasons, the second phase is assumed to have a negligible duration compared with  the first phase  and not considered in resource allocation.   Considering an arbitrary  slot, the BS schedules a subset of users for complete/partial offloading based on TDMA. The user with partial or no offloading computes a fraction of or all input data, respectively, using a local CPU.  Moreover, the BS is assumed to have perfect knowledge of multiuser channel gains, local computing energy  per bit  and sizes of input data at all users. Using these information, the BS selects offloading users, determines the offloaded data sizes and allocates fractions of the slot to offloading users with the criterion of minimum weighted sum mobile energy consumption. In addition,  channels are assumed to remain  constant within  each slot.

The model of local computing is described as follows.  Assume that the CPU frequency is fixed at each user and may vary over users. Consider an arbitrary time slot. Following the model in \cite{xu2015mec}, let $C_k$ denote the number of CPU cycles required for computing $1$-bit of input data at the $k$-th mobile,  and $P_k$ the  energy consumption per  cycle for local computing at this user. Then the product $C_kP_k$ gives  computing energy per bit.  As shown in Fig.~\ref{Fig:Sys_Multiuser_MEC}(b), mobile  $k$ is required to compute $R_k$-bit input data within the slot, out of which $\ell_k$-bit is offloaded and $(R_k -\ell_k)$-bit is computed locally.  Then the total energy consumption for local computing at mobile  $k$, denoted as $E_{\text{loc},k}$, is given by $E_{\text{loc},k}= (R_k-\ell_k)C_k P_k$. Let $F_k$  denote the computation capacity of  mobile $k$ that is measured by the number of CPU cycles per second. Under the computation latency constraint, $C_k (R_k-\ell_k) \le F_k T$. As a result, the offloaded data at mobile $k$ has the minimum size of   $\ell_k \ge m_k^+$ with $m_k=R_k-\frac{F_k T}{C_k}$, where the function $(x)^+=\max\{x, 0\}.$

\begin{figure}[t]
\begin{center}
\subfigure[Multiuser MECO system.]{\includegraphics[width=7.5cm]{./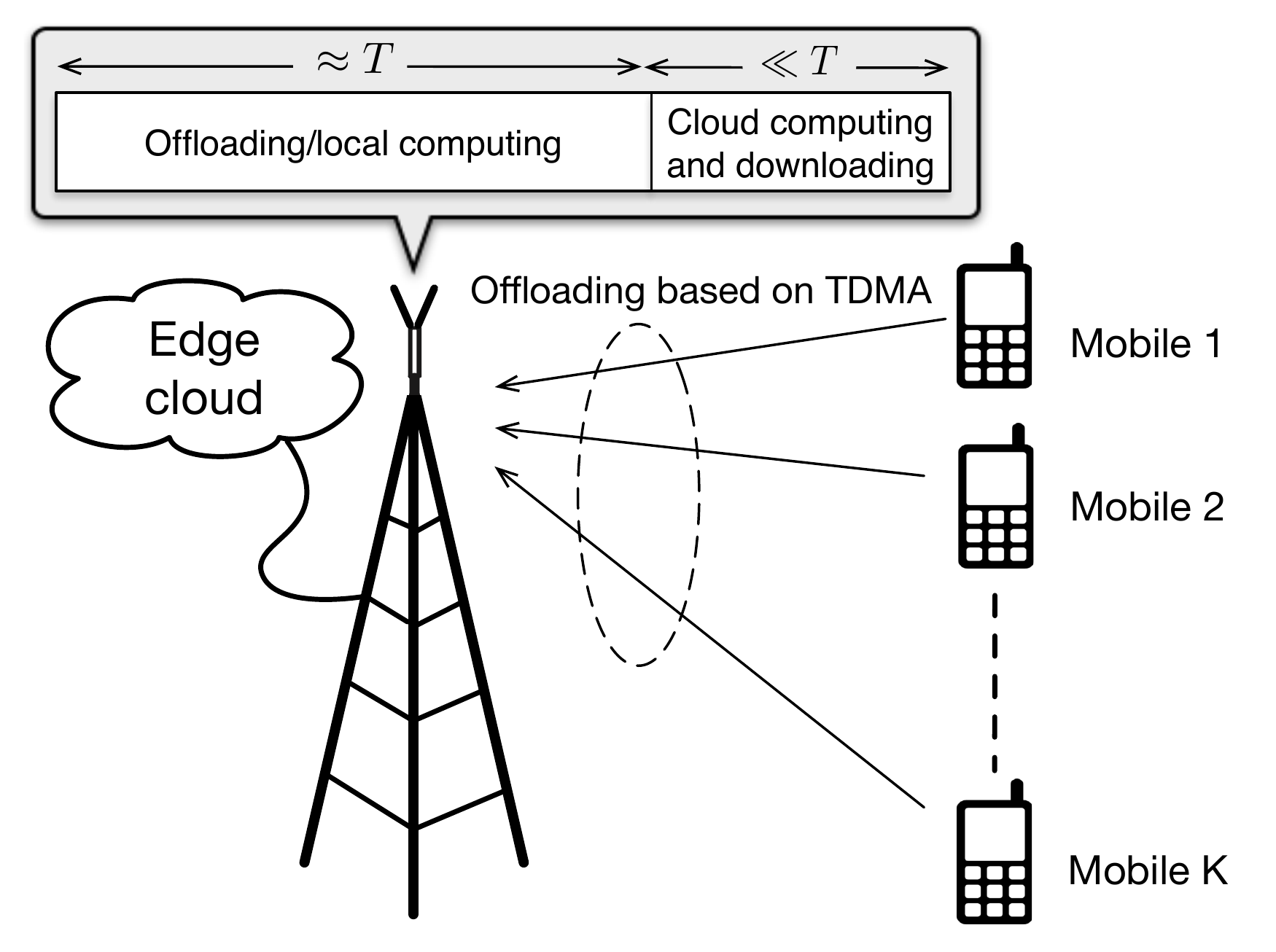}}\\
\vspace{-5pt}
\subfigure[Mobile computation offloading.]{\includegraphics[width=8.5cm]{./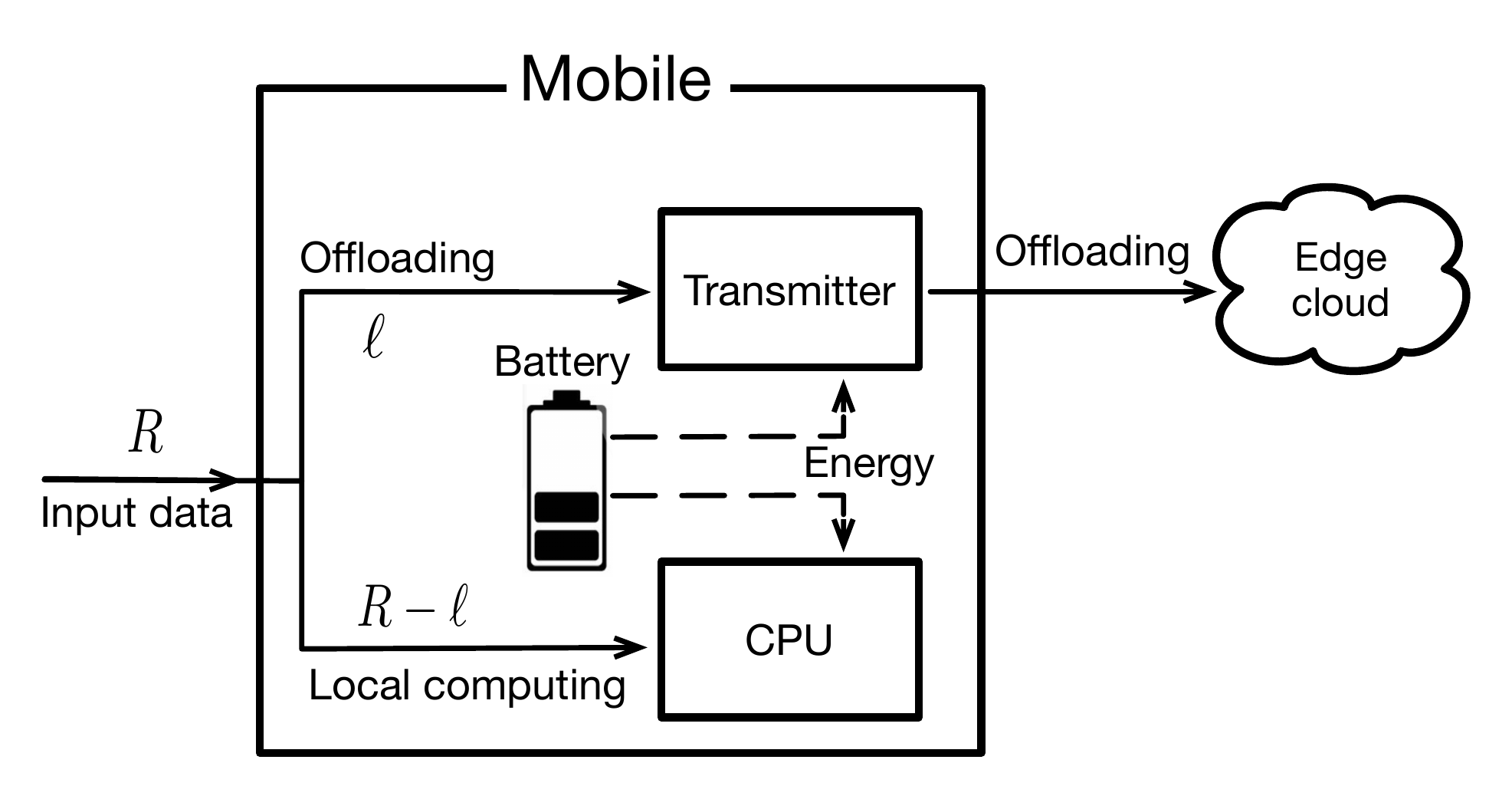}}
\caption{(a) Multiuser MECO system and (b) mobile computation offloading.}
\label{Fig:Sys_Multiuser_MEC}
\end{center}
\end{figure}

Next, the energy consumption for offloading  is modeled. Let $h_k$ denote the channel gain and $p_k$ the transmission power for mobile $k$.  Then the achievable rate, denoted by $r_k$, is given as:
\begin{equation}
r_k=B \log\l(1+\dfrac{p_k h_k^2}{N_0}\r) \label{Eq:TDMARate}
\end{equation} 
where  $N_0$ is the variance of complex white Gaussian channel noise. The fraction of slot allocated to mobile $k$ for offloading is denoted as $t_k$ with $t_k\geq 0$,   where $t_k =0$ corresponds to no offloading. For the case of offloading ($t_k > 0$), the transmission rate is fixed as $r_k\!\!\!=\!\!\!\ell_k/t_k$ since this is the most energy-efficient transmission policy under a deadline constraint. Define a function $f(x)=N_0 (2^{\frac{x}{B}}-1)$. It follows from \eqref{Eq:TDMARate} that the energy consumption  for offloading at  mobile $k$ is
\begin{equation}
E_{\text{off},k}=p_k t_k=\dfrac{t_k}{h_k^2} f\!\l(\dfrac{\ell_k}{t_k}\r). \label{Eq:OffEgy}
\end{equation} 
Note that if either $\ell_k=0$ or $t_k = 0$, $E_{\text{off},k}$ is equal to zero.

Last, consider the edge cloud. It  is assumed that the edge cloud has finite computation capacity, denoted as $F$, measured as the maximum CPU cycles allowed for computing the sum  offloaded data in each slot: $\sum_{k=1}^{K} C_k \ell_k \le F $. This constraint ensures low latency for cloud computing.

\section{Multiuser  MECO: Problem Formulation}

In this section, resource allocation for multiuser MECO is formulated as an optimization problem. The objective is to minimize the  weighted sum mobile energy consumption: $\sum_{k=1}^K \beta_k (E_{\text{off},k}+E_{\text{loc},k})$,  where the positive weight factors $\{\beta_k\}$ account for fairness among mobiles. Under the constraints on time-sharing, cloud computation capacity and computation latency, the resource allocation problem is formulated as follows: 
\begin{equation}\tag{$\textbf{P1}$} 
\begin{aligned}
\min_ {\{\ell_k, t_k\} }   ~ &\sum_{k=1}^{K} \beta_k \l[ \dfrac{t_k}{h_k^2}  f\!\l(\frac{\ell_k}{t_k}\r) + (R_k-\ell_k)C_k P_k \r] \\
\text{s.t.}\qquad 
& \sum_{k=1}^K t_k \le T, \quad \sum_{k=1}^{K} C_k \ell_k \le F,\\ 
& t_k \ge 0, \quad m_k^+\le \ell_k \le R_k, \quad \forall \ k. 
\end{aligned}
\end{equation}

Several basic characteristics of Problem P1 are given  in  the following two  lemmas.  

\begin{lemma}\label{Lem:ConvexProblem} \emph{Problem P1 is a convex optimization problem.}
\end{lemma}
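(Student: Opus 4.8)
The plan is to verify directly the two defining properties of a convex program: that the feasible region is a convex set and that the objective is a convex function of the optimization variables $\{\ell_k, t_k\}$.

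First I would dispose of the feasible set, which is the easy part. Every constraint in P1 is affine in $(\ell_k, t_k)$: the time-sharing constraint $\sum_k t_k \le T$ and the cloud-capacity constraint $\sum_k C_k \ell_k \le F$ are linear inequalities, while $t_k \ge 0$ and the box constraint $m_k^+ \le \ell_k \le R_k$ are also linear once one notes that $m_k^+ = \max\{m_k, 0\}$ is a fixed constant determined by the problem data $(R_k, F_k, T, C_k)$. The feasible set is therefore an intersection of finitely many half-spaces, i.e. a polyhedron, and hence convex.

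Next I would turn to the objective. Because each weight $\beta_k > 0$ and a nonnegatively weighted sum of convex functions is convex, it suffices to establish convexity of each summand. Each summand splits into two pieces. The local-computing term $(R_k - \ell_k) C_k P_k$ is affine in $\ell_k$ and hence trivially convex. The offloading term is $\frac{1}{h_k^2}\, t_k f(\ell_k / t_k)$; since $1/h_k^2 > 0$, the crux is the \emph{joint} convexity of $t_k f(\ell_k/t_k)$ in $(\ell_k, t_k)$. The key observation is that this expression is precisely the \emph{perspective} of the scalar function $f$. Since $f(x) = N_0 (2^{x/B} - 1)$ is convex in $x$ (it is a positive scaling and vertical shift of the convex exponential $2^{x/B} = e^{(\ln 2)\, x / B}$), and the perspective operation is known to preserve convexity, the term $t_k f(\ell_k/t_k)$ is jointly convex on $t_k > 0$.

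The step I expect to require the most care is the treatment of the boundary $t_k = 0$, where the perspective is defined only in a limiting sense, yet the feasible set includes $t_k = 0$ (the no-offloading choice). The perspective $t f(\ell/t)$ is convex on the open half-space $t_k > 0$, so I would extend it to $t_k = 0$ by its closure, i.e. its lower-semicontinuous hull, which remains convex. This extension is consistent with the stated convention $E_{\text{off},k} = 0$ when $t_k = 0$ and $\ell_k = 0$, and it assigns the value $+\infty$ to the degenerate case $\ell_k > 0$ with $t_k = 0$, thereby correctly ruling out offloading a positive amount of data in zero time. With this extension in place, the objective is a convex function over the convex feasible set, and P1 is therefore a convex optimization problem.
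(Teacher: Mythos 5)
Your proof is correct and follows essentially the same route as the paper: the objective is a nonnegatively weighted sum of affine terms and perspectives $t_k f(\ell_k/t_k)$ of the convex function $f$, combined with an affine (hence convex) feasible set. Your additional care at the boundary $t_k=0$ via the lower-semicontinuous closure of the perspective is a refinement the paper's one-line proof omits, but it does not change the underlying argument.
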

\begin{proof}
See Appendix~\ref{App:ConvexProblem}.
\end{proof}

\begin{lemma}\label{Lem:FeasiTypeI} \emph{The feasibility condition for Problem P1 is: $\sum_{k=1}^K m_k^+ C_k \le F$.}
\end{lemma}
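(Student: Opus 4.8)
The plan is to show that the stated inequality is both \emph{necessary} and \emph{sufficient} for the constraint set of Problem P1 to be nonempty, which is exactly what feasibility means here. For necessity, I would begin with an arbitrary feasible tuple $\{\ell_k, t_k\}$ and combine the per-user lower bound $\ell_k \ge m_k^+$ with the cloud-capacity constraint. Since $C_k > 0$, multiplying $\ell_k \ge m_k^+$ by $C_k$ and summing over $k$ gives $\sum_{k=1}^K C_k m_k^+ \le \sum_{k=1}^K C_k \ell_k \le F$, so feasibility forces $\sum_{k=1}^K C_k m_k^+ \le F$.

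For sufficiency, I would explicitly construct a feasible point under the hypothesis $\sum_{k=1}^K C_k m_k^+ \le F$. The natural candidate is minimum offloading, $\ell_k = m_k^+$ for every $k$. First I must check that each box constraint $m_k^+ \le \ell_k \le R_k$ is consistent, i.e. $m_k^+ \le R_k$: this holds because $m_k = R_k - F_k T / C_k < R_k$ (as $F_k, T, C_k > 0$) and $R_k \ge 0$, hence $m_k^+ = \max\{m_k, 0\} \le R_k$. With this choice the cloud constraint reduces exactly to the hypothesis and so is met.

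It then remains to assign the time shares. Writing $\mathcal{K} = \{k : m_k^+ > 0\}$ for the users that are forced to offload, I would give each $k \in \mathcal{K}$ a strictly positive $t_k$ with $\sum_{k \in \mathcal{K}} t_k \le T$ --- always possible because $T > 0$ --- and set $t_k = 0$ for $k \notin \mathcal{K}$. This respects $t_k \ge 0$ and the time budget $\sum_k t_k \le T$; moreover it keeps every offloading term $E_{\text{off},k} = (t_k / h_k^2) f(\ell_k / t_k)$ finite, since for $k \notin \mathcal{K}$ we have $\ell_k = 0$ and $t_k = 0$ (whence $E_{\text{off},k} = 0$) while for $k \in \mathcal{K}$ the ratio $\ell_k / t_k$ is finite. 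All constraints of P1 are thus satisfied by the constructed tuple, establishing sufficiency.

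The two directions together give the claim. I expect the only delicate point --- not a genuine obstacle --- to be the bookkeeping for users with $\ell_k > 0$: one must allocate them strictly positive time to avoid the degenerate case $\ell_k / t_k \to \infty$ that would make the offloading energy blow up, rather than merely checking the inequality constraints. This is why the construction assigns positive $t_k$ to exactly the members of $\mathcal{K}$, which the positivity of $T$ guarantees is possible. The necessity direction is immediate, so the substance of the argument lies in exhibiting the minimum-offloading point and verifying its consistency with the box and capacity constraints.
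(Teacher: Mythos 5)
Your proof is correct and follows essentially the same route as the paper: necessity via $\sum_k C_k m_k^+ \le \sum_k C_k \ell_k \le F$, and sufficiency via the minimum-offloading point $\ell_k = m_k^+$. In fact your write-up is more complete than the paper's, which only states the chain of inequalities and leaves the sufficiency construction (checking $m_k^+ \le R_k$ and exhibiting admissible time shares) implicit.
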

\begin{proof}
See Appendix~\ref{App:FeasiTypeI}.
\end{proof}

Lemma \ref{Lem:FeasiTypeI} shows that whether the cloud computation capacity constraint is satisfied  determines the feasibility of this optimization problem, while the time-sharing constraint can always be satisfied and only affects the mobile energy consumption.

Assume that Problem P1 is feasible. The direct solution of Problem P1 using the dual-decomposition approach (the  Lagrange method)  requires iterative computation and yields no insight into the structure of the optimal policy. To address these issues, we adopt a two-stage solution approach  that requires first solving Problem P2  below that relaxes  Problem P1 by removing the constraint on the cloud computation capacity: 
\begin{equation}\tag{$\textbf{P2}$} 
\begin{aligned}
\min_ {\{\ell_k, t_k\} }   ~ &\sum_{k=1}^{K} \beta_k \l[ \dfrac{t_k}{h_k^2}  f\!\l(\frac{\ell_k}{t_k}\r) + (R_k-\ell_k)C_k P_k \r] \\
\text{s.t.}\qquad 
& \sum_{k=1}^K t_k \le T,\\ 
& t_k \ge 0,  \quad m_k^+\le \ell_k \le R_k, \quad \forall \ k. 
\end{aligned}
\end{equation}
If the solution for Problem P2 violates the constraint on cloud computation capacity, Problem P1 is then incrementally solved building on the solution for Problem P2. This approach allows the optimal policy to be  shown to have the said  threshold-based structure and also facilitates the design of low-complexity close-to-optimal resource-allocation algorithm. It is interesting to note that Problem P2  corresponds to the case where the edge cloud has infinite computation capacity.  The detailed procedures for solving Problems P1 and P2 are presented in the subsequent two sections.

\section{Multiuser  MECO: Infinite Cloud Capacity}
In this section, by solving Problem P2 using the Lagrange method, we derive a threshold-based policy for the optimal resource allocation. Moreover, the policy is simplified for several special cases. 

To solve Problem P2, the  Lagrange function is defined as 
\begin{equation*}
L=\sum_{k=1}^{K} \beta_k \l[ \dfrac{t_k}{h_k^2}  f\l(\frac{\ell_k}{t_k}\r) + (R_k-\ell_k) C_k P_k \r]+ \lambda\l(\sum_{k=1}^K t_k - T\r)
\end{equation*}
where $\lambda \geq 0$ is the Lagrange multiplier associated with the time-sharing constraint. For ease of notation, define a function $g(x)=f(x)-x f^{'}(x)$. Let $\{\ell_k^{*(2)}, t_k^{*(2)}\}$ denote the solution for Problem P2 that always exists according to Lemma~\ref{Lem:FeasiTypeI}.  Then applying KKT conditions leads to the following necessary and sufficient conditions:
\vspace{-5pt}
\begin{subequations}
\begin{align}\label{Eq:OptL}
 & \dfrac{\partial L}{\partial \ell_k^{*(2)}}\!=\!\frac{\beta_k f^{'}\!\!\l(\frac{\ell_k^{*(2)}}{t_k^{*(2)}}\r)}{h_k^2}\!-\!\!\beta_k C_k P_k
\begin{cases}
>0, \!\!&\ell_k^{*(2)}\!=\!m_k^+\cr =0, \!&\ell_k^{*(2)}\in(m_k^+, R_k) \cr <0, \!&\ell_k^{*(2)}=R_k
\end{cases}, \forall k., \\ 
 & \dfrac{\partial L}{\partial t_k^{*(2)}}=\dfrac{ \beta_k g\!\l(\frac{\ell_k^{*(2)}}{t_k^{*(2)}}\r)}{h_k^2}+\lambda^*
\begin{cases}
  >0, & t_k^{*(2)}=0 \cr =0, &t_k^{*(2)}>0 
\end{cases}, 
\forall k.,  \label{Eq:OptT}\\
& \sum_{k=1}^K t_k^{*(2)}\le T,~~~~\lambda^* \l(\sum_{k=1}^K t_k^{*(2)} - T\r)=0. \label{Eq:OptT:a}
\end{align}
\end{subequations}

Based on these conditions,  the optimal policy for resource allocation is characterized in the following sub-sections. 
\subsection{Offloading Priority Function}
Define  an (mobile)  \emph{offloading priority function}, which is essential for the optimal resource allocation, as follows: 
\begin{equation}\label{Eq:OffPriority}
\!\!\!\phi(\beta_k, C_k,  P_k, h_k)\!=\!\!\begin{cases}
   \dfrac{\beta_k N_0}{h_k^2} \l( \upsilon_k  \ln \upsilon_k\!-\!\upsilon_k\!+\!1\r),  \!\!\!&\mbox{$\upsilon_k\ge 1$}\\
   0,  &\mbox{$\upsilon_k<1$}
   \end{cases},
\end{equation}
with the constant $\upsilon_k$ defined as 
\begin{equation}\label{Eq:Const}
\upsilon_k=\dfrac{B C_k  P_k h_k^2 }{N_0\ln2 }. 
\end{equation}

This function is derived by solving a useful equation as shown in the following lemma. 

\begin{lemma}\label{Lem:PriStrongLA}\emph{
Given $\upsilon_k\ge1$, the offloading priority function $\phi(\beta_k, C_k,  P_k, h_k)$ in \eqref{Eq:OffPriority} is   the root of the equation with respect to $x$: $$f^{'-1}\!\l(C_k P_k h_k^2\r)\!=\!g^{-1}\!\l(\frac{-h_k^2 x}{\beta_k} \r).$$ 
}
\end{lemma}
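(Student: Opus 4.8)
The plan is to treat this as a direct verification: the left-hand side $f^{'-1}\!\l(C_k P_k h_k^2\r)$ is a single well-defined number, call it $y^*$, so the claimed equation reduces to computing $g(y^*)$ and checking that setting $g(y^*)$ equal to $-h_k^2 x/\beta_k$ yields precisely $x=\phi(\beta_k, C_k, P_k, h_k)$. Thus the whole lemma is an exercise in careful bookkeeping with the constants, anchored on the explicit forms of $f$ and $g$.

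First I would differentiate $f(x)=N_0(2^{x/B}-1)$ to obtain $f^{'}(x)=\dfrac{N_0\ln 2}{B}\,2^{x/B}$, which is strictly increasing and hence invertible. Setting $f^{'}(y^*)=C_k P_k h_k^2$ and solving gives $2^{y^*/B}=\dfrac{B C_k P_k h_k^2}{N_0\ln 2}=\upsilon_k$, so that $y^*=B\log_2\upsilon_k$. Here the hypothesis $\upsilon_k\ge 1$ is exactly what guarantees $y^*\ge 0$, keeping the argument in the physically meaningful domain $\ell_k/t_k\ge 0$.

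Next I would evaluate $g(y^*)=f(y^*)-y^* f^{'}(y^*)$ using the quantities already in hand: $f(y^*)=N_0(\upsilon_k-1)$ and $f^{'}(y^*)=C_k P_k h_k^2$. Substituting $y^*=B\log_2\upsilon_k$ and then eliminating $C_k P_k h_k^2$ through the defining relation $C_k P_k h_k^2=\dfrac{N_0\ln 2}{B}\,\upsilon_k$, the two terms combine and collapse to $g(y^*)=-N_0\l(\upsilon_k\ln\upsilon_k-\upsilon_k+1\r)$. Equating this with the right-hand side requirement $g(y^*)=-h_k^2 x/\beta_k$ and solving for $x$ returns exactly the $\upsilon_k\ge 1$ branch of $\phi$, completing the verification.

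The arithmetic is routine; the only genuine subtlety is justifying that $g^{-1}$ is meaningful, i.e.\ that $g$ is invertible on the relevant domain so that the root is unique. I would handle this by computing $g^{'}(x)=f^{'}(x)-f^{'}(x)-x f^{''}(x)=-x f^{''}(x)$, which is strictly negative for $x>0$ since $f$ is strictly convex ($f^{''}>0$). Hence $g$ is strictly decreasing on $(0,\infty)$, the inversion $g^{-1}(-h_k^2 x/\beta_k)=y^*$ is unambiguous, and $\phi$ is the unique root as claimed.
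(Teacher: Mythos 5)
Your proof is correct and follows essentially the same route as the paper's: both compute $g\bigl(f'^{-1}(C_kP_kh_k^2)\bigr)$ explicitly and solve $-h_k^2x/\beta_k$ equal to that value for $x$, with the condition $\upsilon_k\ge 1$ serving to keep $f'^{-1}(C_kP_kh_k^2)\ge 0$. Your added observation that $g'(x)=-xf''(x)<0$ makes $g$ strictly decreasing (so the root is unique) is a slightly more elementary justification of a fact the paper establishes separately via the Lambert-function form of $g^{-1}$, but the substance of the argument is the same.
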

\begin{proof}
See Appendix~\ref{App:PriStrongLA}.
\end{proof}

The function  generates an offloading priority value, $\phi_k = \phi(\beta_k, C_k,  P_k, h_k)$,   for mobile $k$  depending on corresponding variables quantifying  fairness, local computing and channel. The amount of offloaded data by a mobile grows with an increasing offloading priority as shown in the next sub-section. It is useful to understand the effects of parameters on the offloading priority that are characterized as follows. 

\begin{lemma}\label{Lem:Phi}\emph{
Given $\upsilon \geq 1$, $\phi(\beta, C, P, h)$ is a \emph{monotone increasing function} for  $\beta$, $C$, $P$ and $h$.  
}
\end{lemma}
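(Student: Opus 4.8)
The plan is to reduce everything to the single auxiliary function $\psi(\upsilon) = \upsilon\ln\upsilon - \upsilon + 1$, in terms of which $\phi = \frac{\beta N_0}{h^2}\,\psi(\upsilon)$ whenever $\upsilon \geq 1$. First I would record two elementary facts about $\psi$ on $[1,\infty)$: since $\psi'(\upsilon) = \ln\upsilon$, which is nonnegative for $\upsilon \geq 1$, $\psi$ is nondecreasing there; and since $\psi(1) = 0$, this gives $\psi(\upsilon) \geq 0$ for all $\upsilon \geq 1$. These two observations carry almost the entire argument, so establishing them cleanly is the natural starting point. Throughout, ``monotone increasing'' is meant in the weak sense, i.e. the relevant partial derivatives are nonnegative.

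Next I would dispatch the three ``easy'' variables $\beta$, $C$ and $P$. For $\beta$, note that $\upsilon$ in \eqref{Eq:Const} does not depend on $\beta$, so $\phi$ is linear in $\beta$ with coefficient $\frac{N_0}{h^2}\,\psi(\upsilon) \geq 0$, giving monotonicity immediately. For $C$ and $P$, each enters only through $\upsilon$, which is directly proportional to each of them; the chain rule then yields $\partial\phi/\partial C = \frac{\beta N_0}{h^2}\,\psi'(\upsilon)\,\partial\upsilon/\partial C$, and since $\psi'(\upsilon) = \ln\upsilon \geq 0$ and $\partial\upsilon/\partial C > 0$, the derivative is nonnegative. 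The argument for $P$ is identical by symmetry.

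The main obstacle is the dependence on $h$, because $h$ enters $\phi$ with competing signs: the prefactor $\frac{\beta N_0}{h^2}$ is decreasing in $h$ while $\upsilon \propto h^2$ is increasing. To handle this I would substitute $u = h^2$ and write $\phi = \beta N_0\,\psi(bu)/u$ with $b = BCP/(N_0\ln 2) > 0$. Differentiating with respect to $u$ produces a numerator $bu\,\psi'(bu) - \psi(bu)$, i.e. $\upsilon\,\psi'(\upsilon) - \psi(\upsilon)$ with $\upsilon = bu$. The key cancellation is that $\upsilon\,\psi'(\upsilon) - \psi(\upsilon) = \upsilon\ln\upsilon - (\upsilon\ln\upsilon - \upsilon + 1) = \upsilon - 1$, which is nonnegative for $\upsilon \geq 1$; hence $\phi$ is nondecreasing in $u$, and since $u = h^2$ is increasing for $h > 0$, it is nondecreasing in $h$ as well. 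This cancellation is the only nontrivial step, everything else being routine bookkeeping of derivatives, so I would present it explicitly rather than bury it inside a longer computation.
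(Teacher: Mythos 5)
Your proof is correct and follows exactly the route the paper indicates (the paper omits the details, simply remarking that the lemma ``can be easily proved by deriving the first derivatives of $\phi$ with respect to each parameter''). Your treatment of the one genuinely delicate case --- the competing dependence on $h$ through the prefactor $\beta N_0/h^2$ and through $\upsilon\propto h^2$, resolved by the cancellation $\upsilon\,\psi'(\upsilon)-\psi(\upsilon)=\upsilon-1\ge 0$ --- is exactly the computation the paper leaves to the reader, and it is done correctly.
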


Lemma~\ref{Lem:Phi} can be easily proved by deriving the first derivatives of $\phi$ with respect to each parameter. Moreover, it is consistent  with the intuition that, to reduce  energy consumption  by offloading, the BS should schedule those  mobiles having  high computing energy  consumption per bit (i.e., large $C$ and $P$) or good channels (i.e., large $h$). 

\begin{remark}[Effects of parameters on the offloading priority]\emph{It can be observed from \eqref{Eq:OffPriority} and \eqref{Eq:Const} that the offloading priority scales with local computing energy per bit $CP$ approximately as $(CP)\ln(CP)$ and with the channel gain $h$ approximately as $\ln h$. The former scaling is much faster than the latter. This shows that the  computing energy per bit is dominant  over the  channel on determining whether to offload. }
\end{remark}

\subsection{Optimal Resource-Allocation Policy}

Based on  conditions in \eqref{Eq:OptL}-\eqref{Eq:OptT:a} and Lemma~\ref{Lem:PriStrongLA}, the main result of this section is derived,  given in the following theorem.
\begin{theorem}[Optimal Resource-Allocation Policy]\label{Theo:OptiPolicyP2}\emph{Consider the case of infinite cloud computation capacity. The optimal policy solving Problem P2 has the following structure. 
\begin{enumerate}
\item If $\upsilon_k \leq 1 $ and the minimum offloaded data size $m_k^+=0$ for all $k$,   none of these  users performs offloading, i.e., $$\ell_k^{*(2)}=t_k^{*(2)}=0 \quad \forall k.$$ 
\item If there exists mobile $k$ such that $\upsilon_k > 1$ or $m_k^+>0$, for $k = 1, 2, \cdots, K$, 
\begin{align*}
\ell_k^{*(2)}&
   \begin{cases}
   =m_k^+,   &\mbox{$\phi_k<\lambda^*$}\\
   \in [m_k^+, R_k],  &\mbox{$\phi_k=\lambda^*$}\\
   =R_k , &\mbox{$\phi_k>\lambda^*$}
   \end{cases},
   \end{align*}
and
 \begin{equation*}
t_k^{*(2)}=\frac{ \ln 2}{B\l[W_0\l(\frac{\lambda^{*} h_k^2/\beta_k-N_0}{N_0 e}\r)+1\r]} \times \ell_k^{*(2)}
\end{equation*}
where $W_0(x)$ is the Lambert function and $\lambda^*$ is the optimal value of the Lagrange multiplier. Furthermore, the time-sharing constraint is active: $\sum_{k=1}^{K} t_k^{*(2)} =  T$. 
\end{enumerate}
}
\end{theorem}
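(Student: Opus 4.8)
The plan is to verify that the stated policy satisfies the KKT system \eqref{Eq:OptL}--\eqref{Eq:OptT:a}, which, by convexity of Problem P2 (Lemma~\ref{Lem:ConvexProblem}), are both necessary and sufficient for global optimality. First I would record the elementary properties of the building-block functions: from $f(x)=N_0(2^{x/B}-1)$ one gets $f'(x)=\frac{N_0\ln2}{B}2^{x/B}>0$ with $f''>0$, so $f'$ is strictly increasing; and since $g'(x)=-xf''(x)$, the function $g(x)=f(x)-xf'(x)$ satisfies $g(0)=0$ and is strictly decreasing on $(0,\infty)$, hence $g<0$ there and $g^{-1}$ is strictly decreasing on its negative-valued domain. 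These sign and monotonicity facts drive the whole argument.

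The second step extracts the optimal rate. Writing $r_k=\ell_k^{*(2)}/t_k^{*(2)}$, the stationarity condition \eqref{Eq:OptT} for an offloading user ($t_k^{*(2)}>0$) reads $g(r_k)=-\lambda^* h_k^2/\beta_k$, which pins $r_k$ down as a decreasing function of $\lambda^*$. Substituting $f,g$ and solving the resulting transcendental equation $e^{u}(1-u)=1-\lambda^* h_k^2/(N_0\beta_k)$ (with $u=r_k\ln2/B$) via the substitution $v=1-u$ reduces it to a Lambert-$W$ equation whose principal-branch solution gives $r_k=\frac{B}{\ln2}\!\left[W_0\!\left(\frac{\lambda^* h_k^2/\beta_k-N_0}{N_0e}\right)+1\right]$; dividing $\ell_k^{*(2)}$ by $r_k$ then yields exactly the claimed $t_k^{*(2)}$ formula, and one checks $r_k\ge0$ because $W_0\ge-1$.

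The third, central step produces the threshold structure for $\ell_k^{*(2)}$. Condition \eqref{Eq:OptL} says the sign of $\partial L/\partial\ell_k^{*(2)}$ is the sign of $f'(r_k)-C_kP_kh_k^2$, which, because $f'$ is increasing, is the sign of $r_k-\tilde r_k$ with $\tilde r_k:=f'^{-1}(C_kP_kh_k^2)$. By Lemma~\ref{Lem:PriStrongLA}, when $\upsilon_k\ge1$ the priority $\phi_k$ is precisely the value making $\tilde r_k=g^{-1}(-\phi_k h_k^2/\beta_k)$; since $r_k=g^{-1}(-\lambda^* h_k^2/\beta_k)$ and $g^{-1}$ is decreasing, the comparison $r_k\gtrless\tilde r_k$ is equivalent to $\lambda^*\gtrless\phi_k$. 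Feeding this back into \eqref{Eq:OptL} gives $\ell_k^{*(2)}=m_k^+$ when $\phi_k<\lambda^*$, an interior $\ell_k^{*(2)}$ when $\phi_k=\lambda^*$, and $\ell_k^{*(2)}=R_k$ when $\phi_k>\lambda^*$, which is the second case of the theorem.

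Finally I would dispatch the degenerate case and the active-constraint claim. When $\upsilon_k\le1$ for all $k$, we have $C_kP_kh_k^2\le f'(0)$, so $f'(r_k)>C_kP_kh_k^2$ for every $r_k>0$, forcing $\ell_k^{*(2)}=m_k^+$; together with $m_k^+=0$ this gives $\ell_k^{*(2)}=t_k^{*(2)}=0$ and, by complementary slackness, $\lambda^*=0$, proving case~1. For case~2, a first-order perturbation shows that offloading a marginal $\epsilon$ bits changes the weighted energy by $\beta_k\epsilon\,h_k^{-2}(f'(0)-C_kP_kh_k^2)$, which is strictly negative exactly when $\upsilon_k>1$; hence some user offloads a positive amount (this is automatic if instead $m_k^+>0$), and because $\partial E_{\text{off},k}/\partial t_k=g(r_k)/h_k^2<0$ the energy strictly decreases in the allotted time, so the time-sharing constraint must bind, $\sum_k t_k^{*(2)}=T$. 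The main obstacle I anticipate is the careful bookkeeping at the boundaries---reconciling the $0/0$ indeterminacy of $r_k=\ell_k^{*(2)}/t_k^{*(2)}$ when a user does not offload, and ensuring that the two opposing monotonicities ($f'$ increasing versus $g^{-1}$ decreasing, together with the minus sign in its argument) combine into the correct direction of the threshold inequality.
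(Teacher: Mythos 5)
Your proposal is correct and follows essentially the same route as the paper's own proof: verifying the KKT system, inverting $g$ via the Lambert function to obtain $t_k^{*(2)}$ (the paper's Lemma~\ref{Lem:GInverse}), and combining Lemma~\ref{Lem:PriStrongLA} with the monotone decrease of $g^{-1}$ on negative arguments (the paper's Lemma~\ref{Lem:IncreaseG}) to translate the comparison $f'(r_k)\gtrless C_kP_kh_k^2$ into the threshold test $\phi_k\gtrless\lambda^*$. The only cosmetic difference is that the paper makes the boundary clamping explicit through a min/max projection of $f'^{-1}(C_kP_kh_k^2)$ onto $[m_k^+/t_k^{*(2)},\,R_k/t_k^{*(2)}]$, which you handle implicitly through the sign cases of \eqref{Eq:OptL}.
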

\begin{proof}
See Appendix~\ref{App:OptiPolicyP2}.
\end{proof}

Theorem~\ref{Theo:OptiPolicyP2} reveals that the optimal resource-allocation policy has a threshold-based structure when offloading saves energy. In other words,  since the exact case of $\phi_k=\lambda^*$ rarely occurs in practice, the optimal policy makes a \emph{binary offloading decision} for each mobile. Specifically, if the corresponding offloading priority exceeds a given threshold, the mobile should offload all input data to the edge cloud; otherwise,   the mobile should offload only the minimum amount of data under the computation latency constraint.  This result is consistent with the intuition that the  greedy method can lead to the optimal  resource allocation.

\begin{remark}[Offloading or not?]\emph{For a conventional TDMA communication  system, continuous transmission by at least one  mobile  is always advantageous under the criterion of minimum sum energy consumption \cite{wang2008power}. However, this does not always hold for a TDMA  MECO system where no offloading for all users may be preferred  as shown in Theorem~\ref{Theo:OptiPolicyP2}. There are two cases where offloading is necessary. First, there exists at least one   mobile whose input data size is too large such that complete local computing fails to meet the latency constraint. Second,  some  mobile has a sufficient high value for the product  $ C_k P_k h_k^2$, indicating  that energy savings can be achieved by offloading because of high channel gain or large local computing energy consumption. }
\end{remark}

\begin{remark}[Offloading rate]\emph{It can be observed from Theorem~\ref{Theo:OptiPolicyP2}  that the offloading rate, defined as $\ell_k^{*(2)}/t_k^{*(2)}$ for mobile $k$,  is determined only by the channel gain and fairness weight  factor while other factors, namely $C_k$ and  $P_k$, affect the offloading decision. The rate increases with a growing  channel gain and vice versa since a large channel gain supports a higher  transmission rate or reduces transmission power, making offloading desirable for reducing  mobile energy consumption. }
\end{remark}

\begin{remark}[Algorithm computation complexity]\emph{
The traditional method for solving Problem P2 is the block-coordinate descending which performs iterative optimization  of the two sets of variables,  $\{\ell_k\}$ and $\{t_k\}$,  resulting in high computation complexity.  In contrast,  by exploiting  the threshold-based structure of the optimal resource-allocation policy in Theorem~\ref{Theo:OptiPolicyP2}, the proposed solution approach, described in Algorithm~\ref{Alg:SLA:Opt}, needs to perform only a  \emph{one-dimension} search for  $\lambda^{*}$, reducing the computation complexity significantly. To facilitate the search, next lemma gives the range of $\lambda^{*}$, which can be easily proved from Theorem~\ref{Theo:OptiPolicyP2} and omitted for simplicity.}
\end{remark}

\begin{lemma}\label{Lem:ProLambda}
\emph{When there is at least one offloading mobile, the optimal Lagrange multiplier $\lambda^*$ satisfies:
$$0\le \lambda^* \le \lambda_{\max}=\max_k  \phi_k.$$
}
\end{lemma}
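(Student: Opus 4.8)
The plan is to establish the two inequalities separately, reading both off the KKT characterization \eqref{Eq:OptL}--\eqref{Eq:OptT:a} and the threshold structure of Theorem~\ref{Theo:OptiPolicyP2}. The lower bound $\lambda^*\ge 0$ is immediate: $\lambda^*$ is the multiplier associated with the inequality time-sharing constraint $\sum_k t_k\le T$, so KKT dual feasibility forces $\lambda^*\ge 0$, which is precisely the sign already recorded in the complementary-slackness condition \eqref{Eq:OptT:a}. Only the upper bound requires an argument.

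For the upper bound I would argue as follows. ``At least one offloading mobile'' is read as: there is an index $k_0$ with $\ell_{k_0}^{*(2)}>m_{k_0}^+$, i.e.\ a mobile that offloads strictly more than its latency-forced minimum; equivalently, by the binary structure of Theorem~\ref{Theo:OptiPolicyP2}, $k_0$ lies in the complete-offloading set $\ell_{k_0}^{*(2)}=R_{k_0}$ or on the threshold. In either case the threshold rule gives $\phi_{k_0}\ge\lambda^*$, since $\phi_{k_0}<\lambda^*$ would force $\ell_{k_0}^{*(2)}=m_{k_0}^+$, contradicting the choice of $k_0$. Hence $\lambda^*\le \phi_{k_0}\le \max_k\phi_k=\lambda_{\max}$, which is the claim. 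This is the ``easy'' route the text alludes to, because Theorem~\ref{Theo:OptiPolicyP2} already encodes the comparison between $\phi_k$ and $\lambda^*$.

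If one prefers a derivation directly from the KKT conditions rather than from the statement of Theorem~\ref{Theo:OptiPolicyP2}, I would instantiate them at such a $k_0$ with $\ell_{k_0}^{*(2)}=R_{k_0}$ and $t_{k_0}^{*(2)}>0$. Writing $x_{k_0}=\ell_{k_0}^{*(2)}/t_{k_0}^{*(2)}$, the upper-boundary branch of \eqref{Eq:OptL} gives $f'(x_{k_0})\le C_{k_0}P_{k_0}h_{k_0}^2$, hence $x_{k_0}\le f^{'-1}(C_{k_0}P_{k_0}h_{k_0}^2)$ because $f$ is convex and $f'$ is increasing; the active branch of \eqref{Eq:OptT} gives $\lambda^*=-\beta_{k_0}g(x_{k_0})/h_{k_0}^2$. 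Since $g'(x)=-x f''(x)\le 0$, $g$ is nonincreasing, so the bound on $x_{k_0}$ yields $-g(x_{k_0})\le -g\!\left(f^{'-1}(C_{k_0}P_{k_0}h_{k_0}^2)\right)$, and Lemma~\ref{Lem:PriStrongLA} identifies the right-hand side, scaled by $\beta_{k_0}/h_{k_0}^2$, as exactly $\phi_{k_0}$; thus $\lambda^*\le\phi_{k_0}\le\lambda_{\max}$ once more.

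The only delicate point, and the one I would flag as the crux, is the meaning of ``offloading mobile'': the argument genuinely needs a mobile offloading beyond the forced minimum, because a mobile stuck at $\ell_k^{*(2)}=m_k^+$ satisfies the \emph{reverse} inequality $\lambda^*\ge\phi_k$ (its lower-boundary condition in \eqref{Eq:OptL} reads $f'(x_k)\ge C_kP_kh_k^2$, which reverses the monotonicity step) and so provides no upper bound on $\lambda^*$. Once the correct $k_0$ is singled out, the remainder is a one-line monotonicity consequence of the convexity of $f$ together with Lemma~\ref{Lem:PriStrongLA}, so no real computation is needed.
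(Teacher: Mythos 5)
Your proposal is correct and follows the paper's intended route: the paper omits the proof, stating only that the bound follows easily from Theorem~\ref{Theo:OptiPolicyP2}, and its explicit proof of the analogous Lemma~\ref{Lem:ProMu} in Appendix~\ref{App:Lem:ProMu} is exactly your first argument (an offloading mobile $k$ must satisfy $\lambda^*\le\phi_k$, hence $\lambda^*\le\max_k\phi_k$, while $\lambda^*\ge0$ is dual feasibility). Your observation that ``offloading'' must be read as offloading strictly beyond the forced minimum $m_k^+$ --- since a mobile pinned at $\ell_k^{*(2)}=m_k^+$ yields the reverse inequality $\lambda^*\ge\phi_k$ --- is a correct and worthwhile clarification that the paper leaves implicit.
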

\vspace{-10pt}
\begin{algorithm}
  \caption{Optimal Algorithm for Problem P2.}
  \label{Alg:SLA:Opt}
  \begin{itemize}
\item{\textbf{Step 1} [Initialize]: \\
Let $\lambda_\ell=0$ and $\lambda_h=\lambda_{\text{max}}$.
 According to Theorem~\ref{Theo:OptiPolicyP2}, obtain $T_{\ell}=\sum_{k=1}^{K} t_{k,\ell}^{*(2)}$  and  $T_{h}=\sum_{k=1}^{K} t_{k,h}^{*(2)}$, where $\{t_{k,\ell}^{*(2)} \}$ and $\{t_{k,h}^{*(2)}\}$ are the allocated fractions of slot for the cases of $\lambda_\ell$ and $\lambda_h$, respectively. }
\item{\textbf{Step 2} [Bisection search]:\\  
While $T_{\ell} \neq T$ and  $T_{h} \neq T$, update $\{\lambda_\ell, \lambda_h\}$ as follows. (1) Define $\lambda_{m}=(\lambda_\ell+\lambda_h)/2$ and compute $T_{m}$. \\(2) If $T_m = T$, then $\lambda^*=\lambda_m$ and the optimal policy can be determined. Otherwise, if $T_{m}<T$, let $\lambda_{h}=\lambda_{m}$ and if $T_{m}>T$, let $\lambda_{\ell}=\lambda_{m}$.
}
\end{itemize}
  \end{algorithm}
\vspace{-3pt}

\subsection{Special Cases}
The optimal resource-allocation policies for several special cases considering equal weight factors are discussed as follows. 
\subsubsection{Uniform  channels and local computing} Consider the simplest case where $\{ h_k, C_k, P_k\}$ are identical for all $k$. Then all mobiles have uniform offloading priorities. In this case, for optimal resource allocation, different mobiles can offload arbitrary data sizes so long as the sum offloaded data size satisfies the following constraint: 
\begin{equation}
\sum_{k=1}^K \ell_k^{*(2)} \leq T B\log_2\l(\frac{  B C P h^2}{N_0 \ln 2}\r). \nn
\end{equation}

\subsubsection{Uniform channels} Consider the case of   $h_1 = h_2\cdots = h_K$. The offloading priority for each mobile, say mobile $k$,   is only affected by the corresponding  local-computing parameters $P_k$ and $C_k$. Without loss of generality, assume that $P_1 C_1 \leq P_2 C_2 \cdots \leq P_K C_K$. Then the optimal resource-allocation policy is given in the following corollary of Theorem~\ref{Theo:OptiPolicyP2}. 

\begin{corollary}\label{Coro: SpeCase2}\emph{Assume infinite cloud computation capacity,  $h_1 = h_2\cdots = h_K$ and $P_1 C_1 \leq P_2 C_2 \cdots \leq P_K C_K$. Let $k_t$ denote the index such that $\phi_k < \lambda^\ast$ for all $k < k_t $ and $\phi_k >  \lambda^\ast$ for all $k \geq  k_t $. The optimal resource-allocation policy is given as follows: 
\begin{align*}
\ell_{k}^{*(2)}&=
   \begin{cases}
   R_k,  & k \geq k_t\\
   m_k^+ , &\mbox{otherwise}
   \end{cases},
\end{align*}
and $$ t_k^{*(2)}=\frac{ \ln 2}{B\l[W_0\l(\frac{\lambda^{*} h^2/\beta-N_0}{N_0 e}\r)+1\r]}\times \ell_{k}^{*(2)}.$$
}
\end{corollary}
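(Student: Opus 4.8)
The plan is to treat this corollary as a direct specialization of Theorem~\ref{Theo:OptiPolicyP2} to the regime of uniform channel gains $h_k = h$ and equal weights $\beta_k = \beta$. Theorem~\ref{Theo:OptiPolicyP2} already establishes that, whenever offloading is beneficial, the optimal $\ell_k^{*(2)}$ is governed by comparing the offloading priority $\phi_k$ against the multiplier $\lambda^\ast$, with $\ell_k^{*(2)} = R_k$ when $\phi_k > \lambda^\ast$ and $\ell_k^{*(2)} = m_k^+$ when $\phi_k < \lambda^\ast$, and it supplies the time-allocation formula. Hence all that remains is (i) to show that under the stated assumptions the priorities $\phi_k$ are monotonically ordered in the index $k$, so that the comparison with the single threshold $\lambda^\ast$ partitions the users into a low-index ``minimum-offloading'' block and a high-index ``complete-offloading'' block, and (ii) to substitute $h_k = h$ and $\beta_k = \beta$ into the generic expressions.

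The central step is (i). Substituting $h_k = h$ and $\beta_k = \beta$ into \eqref{Eq:Const} shows that $\upsilon_k = B C_k P_k h^2/(N_0 \ln 2)$ is a strictly increasing affine function of the product $C_k P_k$, while \eqref{Eq:OffPriority} shows that $\phi_k$ depends on the local-computing parameters $(C_k, P_k)$ only through this product. I would then verify that the scalar map $\upsilon \mapsto \upsilon \ln \upsilon - \upsilon + 1$ is nondecreasing for $\upsilon \ge 1$ (its derivative is $\ln \upsilon \ge 0$) and vanishes at $\upsilon = 1$, matching the value $0$ assigned on the branch $\upsilon < 1$; thus $\phi$, viewed as a function of $CP$ with $\beta$ and $h$ held fixed, is continuous and nondecreasing across the whole range of $CP$. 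This is the product version of the componentwise monotonicity recorded in Lemma~\ref{Lem:Phi}. Consequently, the assumed ordering $P_1 C_1 \le P_2 C_2 \le \cdots \le P_K C_K$ propagates to $\phi_1 \le \phi_2 \le \cdots \le \phi_K$.

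Given this monotone ordering and a fixed optimal $\lambda^\ast$, the set $\{k : \phi_k > \lambda^\ast\}$ is an upper set and $\{k : \phi_k < \lambda^\ast\}$ a lower set of $\{1, \dots, K\}$, so there is a well-defined index $k_t$ with $\phi_k < \lambda^\ast$ for $k < k_t$ and $\phi_k > \lambda^\ast$ for $k \ge k_t$; invoking case 2 of Theorem~\ref{Theo:OptiPolicyP2} then yields $\ell_k^{*(2)} = R_k$ for $k \ge k_t$ and $\ell_k^{*(2)} = m_k^+$ otherwise, and inserting $h_k = h$, $\beta_k = \beta$ into the $t_k^{*(2)}$ expression of the theorem produces the stated formula involving $W_0$. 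The only delicate point is the boundary case $\phi_k = \lambda^\ast$, for which the theorem permits $\ell_k^{*(2)} \in [m_k^+, R_k]$; the corollary's strict-inequality definition of $k_t$ implicitly assumes the generic situation in which no priority coincides exactly with $\lambda^\ast$ (the tie case being non-generic, as already noted after Theorem~\ref{Theo:OptiPolicyP2}), so I would simply remark that ties, if present, can be absorbed into the choice of $k_t$ without altering the partition. This tie handling is the main (and only mild) obstacle; everything else is substitution into results already proved.
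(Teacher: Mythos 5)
Your proposal is correct and follows essentially the same route the paper intends: the corollary is stated as an immediate specialization of Theorem~\ref{Theo:OptiPolicyP2}, with the only substantive observation being that under uniform channels and weights the priority $\phi_k$ depends on $(C_k,P_k)$ only through the product $C_kP_k$ and is continuous and nondecreasing in it (the map $\upsilon\mapsto\upsilon\ln\upsilon-\upsilon+1$ has derivative $\ln\upsilon\ge 0$ and vanishes at $\upsilon=1$), so the ordering of $P_kC_k$ induces the threshold index $k_t$. Your handling of the tie case $\phi_k=\lambda^{\ast}$ is a reasonable addition that the paper glosses over.
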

The result shows that the  optimal resource-allocation policy follows  a \emph{greedy} approach that selects  mobiles in a descending order of energy consumption per bit for complete offloading until the time-sharing duration is fully utilized.

\subsubsection{Uniform local computing} Consider the case of   $C_1P_1 = C_2P_2\cdots = C_K P_K$. Similar to the previous case, the optimal resource-allocation policy can be shown to follow the greedy approach that selects mobiles for complete offloading in the descending order of channel gain. 

\section{Multiuser  MECO: Finite  Cloud   Capacity}
In this section, we consider the case of finite cloud computation capacity and analyze the optimal resource-allocation policy for solving Problem P1. The policy is shown to also have a threshold-based structure as the infinite-capacity counterpart derived in the preceding section.  Both the optimal and sub-optimal algorithms are presented for policy computation. 

\subsection{Optimal Resource-Allocation Policy}
To solve  the convex Problem P1, the  corresponding Lagrange function can be written as 
\begin{equation}\nn
\begin{aligned}
\tilde{L}=\sum_{k=1}^{K} \beta_k &\l[ \dfrac{t_k}{h_k^2}  f\l(\frac{\ell_k}{t_k}\r) + (R_k-\ell_k) C_k P_k \r]\\ &+ \lambda\l(\sum_{k=1}^K t_k - T\r) +  \mu \l(\sum_{k=1}^{K} C_k \ell_k - F\r).
\end{aligned}
\end{equation}
where $\mu \geq 0 $  is the Lagrange multiplier corresponding to the cloud computation capacity constraint. Using the above Lagrange function,   it is straightforward to show that the corresponding KKT conditions can be modified from their infinite-capacity counterparts in \eqref{Eq:OptL}-\eqref{Eq:OptT:a} by  replacing $P_k$ with $\tilde{P}_k=P_k-\mu$, called the \emph{effective computation energy per cycle}. The resultant \emph{effective offloading priority function}, denoted as $\tilde{\phi}_{k}$,  can be modified accordingly from that in \eqref{Eq:OffPriority} as 
\begin{equation}\label{Eq:EffectiveOffPriority}
\!\!\!\!\tilde{\phi}(\beta_k, C_k, P_k, h_k,\mu)\!\!=\!\!\begin{cases}
   \dfrac{\beta_k N_0}{h_k^2} \!\l( \tilde{\upsilon}_k  \ln \tilde{\upsilon}_k\!-\!\tilde{\upsilon}_k\!+\!1\r),  \!\!\!\!\!&\mbox{$\tilde{v}_k\ge 1$}\\
   0,  &\mbox{$\tilde{v}_k<1$}
   \end{cases},
\end{equation} 
where $\tilde{\upsilon}_k=\dfrac{B C_k  (P_k-\mu) h_k^2 }{N_0\ln2 }$. Based on above discussion, the main result of this section follows as shown below. 
\begin{theorem}[Optimal Resource-Allocation Policy]\label{Theo:OptiPolicyP1}\emph{Consider the case of finite cloud computation capacity.  The optimal policy solving Problem P1 has the same structure as that in Theorem~\ref{Theo:OptiPolicyP2} and is expressed in terms of  the priority function $\tilde{\phi}_{k}$ in 
\eqref{Eq:EffectiveOffPriority} and the optimized Lagrange multipliers $\{\lambda^*, \mu^*\}$. }
\end{theorem}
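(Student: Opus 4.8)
The plan is to show that the KKT system for Problem~P1 is, after a single reparametrization, identical in form to that of Problem~P2, so that the threshold argument already established for Theorem~\ref{Theo:OptiPolicyP2} transplants essentially verbatim. Concretely, I would differentiate the augmented Lagrangian $\tilde{L}$ and compare the resulting stationarity conditions with \eqref{Eq:OptL}--\eqref{Eq:OptT:a} term by term, isolating the effect of the single extra constraint.

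First I would compute $\partial\tilde{L}/\partial t_k$. Because the added term $\mu(\sum_k C_k\ell_k-F)$ does not involve $t_k$, this derivative equals $\beta_k g(\ell_k/t_k)/h_k^2+\lambda$, exactly as in \eqref{Eq:OptT}; hence the $t_k$-stationarity condition, and with it the Lambert-$W$ expression for the optimal $t_k^*$, is left unchanged. Next I would compute $\partial\tilde{L}/\partial\ell_k=\beta_k f'(\ell_k/t_k)/h_k^2-\beta_k C_k P_k+\mu C_k$. The only deviation from \eqref{Eq:OptL} is the additive $\mu C_k$, which combines with the local-computing term $-\beta_k C_k P_k$ and is absorbed into the effective per-cycle energy $\tilde{P}_k$, i.e. into the effective priority $\tilde{\phi}_k$ of \eqref{Eq:EffectiveOffPriority}. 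The remaining KKT data are the feasibility and complementary-slackness pair for the new constraint, namely $\mu^*\ge 0$, $\sum_k C_k\ell_k^*\le F$ and $\mu^*(\sum_k C_k\ell_k^*-F)=0$, which pin down $\mu^*$.

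With this correspondence in hand I would invoke Lemma~\ref{Lem:PriStrongLA} applied to $\tilde{P}_k$: since that lemma is purely algebraic in the per-cycle energy, its root is obtained simply by the substitution $\upsilon_k\mapsto\tilde{\upsilon}_k$, which reproduces exactly \eqref{Eq:EffectiveOffPriority}. Because P1 is convex (Lemma~\ref{Lem:ConvexProblem}) and feasible (Lemma~\ref{Lem:FeasiTypeI}) with affine constraints, the KKT conditions are both necessary and sufficient and a multiplier pair $(\lambda^*,\mu^*)$ exists. Fixing $\mu=\mu^*$, the stationarity and slackness conditions in $\{\lambda,\ell_k,t_k\}$ are identical in form to those of P2, so the case analysis comparing $\tilde{\phi}_k$ against $\lambda^*$ carried out in the proof of Theorem~\ref{Theo:OptiPolicyP2} applies unchanged, yielding the stated threshold structure together with the Lambert-$W$ formula for $t_k^*$.

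The main obstacle is that, unlike in P2, the priority $\tilde{\phi}_k$ itself depends on the unknown multiplier $\mu^*$, so the threshold comparison now couples $\lambda^*$ and $\mu^*$ and the ``priorities'' are no longer fixed inputs. The resolution is to treat $\mu$ as a parameter: for each fixed $\mu$ the reduced problem is of P2-type and the threshold characterization holds, while convexity guarantees a consistent $(\lambda^*,\mu^*)$ meeting both complementary-slackness conditions. A minor case to dispatch is the regime where $\tilde{P}_k\le 0$, forcing $\tilde{\upsilon}_k<1$ and hence $\tilde{\phi}_k=0$; such users fall below any positive threshold and perform only minimum offloading, consistent with the stated structure.
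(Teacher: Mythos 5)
Your proposal follows essentially the same route as the paper, which establishes Theorem~\ref{Theo:OptiPolicyP1} only by the in-text observation that the extra term $\mu C_k$ in $\partial\tilde{L}/\partial\ell_k$ folds into an effective per-cycle energy, after which the KKT system and the threshold analysis of Theorem~\ref{Theo:OptiPolicyP2} carry over verbatim; your handling of the $\mu$-dependence of the priorities and of the $\tilde{P}_k\le 0$ regime is, if anything, more careful than the paper's. One small point (shared with the paper): the exact absorption gives $\tilde{P}_k=P_k-\mu/\beta_k$ rather than $P_k-\mu$, so \eqref{Eq:EffectiveOffPriority} as written is only exact for unit weights $\beta_k=1$.
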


Computing the threshold for the optimal resource-allocation policy requires  a \emph{two-dimension search} over the Lagrange multipliers  $\{\lambda^{*}, \mu^{*}\}$, using  Algorithm~\ref{Alg:WLA:Opt}. For an efficient search, it is useful to limit the range of $\lambda^*$ and $\mu^*$ as follows. 

\begin{lemma}\label{Lem:ProMu}
\emph{When there is at least one offloading mobile, the optimal Lagrange multipliers $\{\lambda^*, \mu^*\}$ satisfy:
\begin{align*}
& 0\le \lambda^* \le \lambda_{\max},\\
&0\le \mu^* \le \mu_{\max}=\max_k \l\{P_k-\dfrac{N_0 \ln 2}{B C_k h_k^2}\r\}
\end{align*}
where $\lambda_{\max}$ has been defined in Lemma~\ref{Lem:ProLambda}.
}
\end{lemma}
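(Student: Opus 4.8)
The plan is to read off both bounds from the Karush--Kuhn--Tucker (KKT) conditions for Problem P1 together with the threshold structure established in Theorem~\ref{Theo:OptiPolicyP1}. The nonnegativity $\lambda^*\ge 0$ and $\mu^*\ge 0$ is immediate, since $\lambda$ and $\mu$ are the multipliers attached to the inequality constraints $\sum_k t_k\le T$ and $\sum_k C_k\ell_k\le F$, respectively, and KKT forces such multipliers to be nonnegative. It therefore remains to establish the two upper bounds, and I would treat them in the order $\mu^*$ first and then $\lambda^*$, because the bound on $\lambda^*$ will reuse the nonnegativity of $\mu^*$.

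For the bound on $\mu^*$, I would fix a mobile $k$ that offloads strictly more than its forced minimum, i.e.\ $\ell_k^{*}>m_k^+$, which exists in the nontrivial regime where offloading actually saves energy. For such a mobile the stationarity condition in $\ell_k$ --- the finite-capacity analogue of \eqref{Eq:OptL}, obtained by the substitution $P_k\mapsto P_k-\mu$ --- gives $f'\!\l(\ell_k^{*}/t_k^{*}\r)\le C_k(P_k-\mu^*)h_k^2$, with equality at an interior optimum and strict inequality at $\ell_k^{*}=R_k$. Because $f'(x)=\tfrac{N_0\ln 2}{B}\,2^{x/B}$ is increasing and $\ell_k^{*}/t_k^{*}\ge 0$, the left-hand side is at least $f'(0)=\tfrac{N_0\ln 2}{B}$. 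Rearranging $\tfrac{N_0\ln 2}{B}\le C_k(P_k-\mu^*)h_k^2$ then yields $\mu^*\le P_k-\tfrac{N_0\ln 2}{BC_kh_k^2}\le \mu_{\max}$. This is just the statement that the effective constant $\tilde{\upsilon}_k$ must satisfy $\tilde{\upsilon}_k\ge 1$, i.e.\ the regime in which $\tilde{\phi}_k>0$.

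For the bound on $\lambda^*$, I would invoke the threshold structure: by Theorem~\ref{Theo:OptiPolicyP1} a mobile with $\ell_k^{*}>m_k^+$ has effective priority $\tilde{\phi}_k\ge\lambda^*$. Since $\mu^*\ge 0$ gives $P_k-\mu^*\le P_k$, and $\phi$ is monotone increasing in its $P$-argument by Lemma~\ref{Lem:Phi}, we have $\tilde{\phi}_k=\phi(\beta_k,C_k,P_k-\mu^*,h_k)\le\phi(\beta_k,C_k,P_k,h_k)=\phi_k$ (the inequality also holds trivially when $\tilde{\upsilon}_k<1$, where $\tilde{\phi}_k=0$). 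Chaining these, $\lambda^*\le\tilde{\phi}_k\le\phi_k\le\max_j\phi_j=\lambda_{\max}$, which is the desired bound.

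The main obstacle is the degenerate case in which every offloading mobile transmits only its forced minimum $m_k^+$ (driven by the latency constraint rather than by energy savings), since there the stationarity inequality points the opposite way and fails to cap $\mu^*$. I would dispose of it by noting that if no mobile offloads beyond its minimum then $\sum_k C_k\ell_k^{*}=\sum_k C_k m_k^+$, so by the feasibility condition of Lemma~\ref{Lem:FeasiTypeI} the cloud constraint is slack whenever $\sum_k C_k m_k^+<F$; complementary slackness then forces $\mu^*=0$ and both bounds hold trivially. The residual boundary situation $\sum_k C_k m_k^+=F$ leaves no room for beneficial offloading and collapses to the same trivial solution, so the stated bounds --- which are intended precisely for the nondegenerate regime explored by the bisection search in Algorithm~\ref{Alg:WLA:Opt} --- are exactly what is proved.
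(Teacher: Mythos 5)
Your proof is correct and follows essentially the same route as the paper's: the paper likewise observes that an offloading mobile must satisfy $\lambda^*\le\tilde{\phi}_k$ and $\tilde{\upsilon}_k\ge 1$, and reads the two upper bounds off these conditions. Your write-up is in fact slightly more careful, since you make explicit the step $\tilde{\phi}_k\le\phi_k$ (via $\mu^*\ge 0$ and the monotonicity of $\phi$ in $P$ from Lemma~\ref{Lem:Phi}) that the paper glosses over when it writes $\max_k\{\tilde{\phi}_k\}=\lambda_{\max}$, and you flag the degenerate forced-minimum regime that both proofs implicitly exclude by their reading of ``offloading mobile.''
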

\begin{proof}
See Appendix~\ref{App:Lem:ProMu}
\end{proof}

Note that $\mu^* = 0$ corresponds to the case of infinite cloud computation capacity and $\mu^* = \mu_{\max}$ to the case where offloading yields no energy savings for any mobile. 

\begin{algorithm}
  \caption{ Optimal Algorithm for Solving Problem P1.}
  \label{Alg:WLA:Opt}
  \begin{itemize}
\item {\textbf{Step 1}[Check solution for Problem P2]: \\Perform Algorithm~\ref{Alg:SLA:Opt}. If $\sum_{k=1}^K \ell_k^{*(2)} \le F$, the optimal policy is given in Theorem~\ref{Theo:OptiPolicyP2}. Otherwise, go to Step 2.
}
\item{\textbf{Step 2} [Initialize]:\\ Let $\mu_\ell=0$ and $\mu_h=\mu_{\max}$.  Based on Theorem~\ref{Theo:OptiPolicyP1}, obtain $F_\ell=\sum_{k=1}^K C_k \ell_{k,\ell}^*$ and $F_h=\sum_{k=1}^K C_k \ell_{k,h}^*$, where $\{\ell_{k,\ell}^*\}$ and $\{\ell_{k,h}^*\}$ are the offloaded data sizes for $\mu_\ell$ and $\mu_h$, respectively, involving the one-dimension search for $\lambda^{*}$. }
\item{\textbf{Step 3} [Bisection search]:\\ While $F_{\ell} \neq F$ and  $F_{h} \neq F$, update $\{\mu_\ell, \mu_h\}$ as follows. (1) Define $\mu_{m}=(\mu_\ell+\mu_h)/2$ and compute $F_{m}$. \\(2) If $F_m = F$, then $\mu^*=\mu_m$  and the optimal policy can be determined. Otherwise, if $F_{m}<F$, let $\mu_{h}=\mu_{m}$ and if $F_{m}>F$, let $\mu_{\ell}=\mu_{m}$.}
\end{itemize}
  \end{algorithm} 
\vspace{-10pt}
\subsection{Sub-Optimal Resource-Allocation Policy}
To reduce the computation  complexity of Algorithm \ref{Alg:WLA:Opt} due to the two-dimension search,  one simple sub-optimal  policy is designed using Algorithm~\ref{Alg:WLA:Sopt}. The key idea is to decouple the computation and radio resource allocation.  In Step~$2$, based on the \emph{approximated} offloading priority in \eqref{Eq:OffPriority} for the case of infinite cloud computation capacity, we allocate the computation resource to mobiles with high offloading priorities. Step $3$  optimizes  the  corresponding fractions of slot given offloaded data. This sub-optimal algorithm has  low complexity requiring only a one-dimension search.  Moreover,  its performance is shown by simulation  to be close-to-optimal in the sequel.

\begin{algorithm}[h]
  \caption{ Sub-optimal Algorithm for Solving  Problem P1.}
  \label{Alg:WLA:Sopt}
  \begin{itemize}
  \item{\textbf{Step 1}:  Perform Algorithm~\ref{Alg:SLA:Opt}. If $\sum_{k=1}^K \ell_k^{*(2)} \!\!\!\le\!\!\! F$, Theorem~\ref{Theo:OptiPolicyP2} gives the optimal policy. Otherwise, go to Step 2.}
\item{\textbf{Step 2}: Based on  offloading priorities in \eqref{Eq:OffPriority}, offload the data from mobiles in the descending order of offloading  priority until  the cloud computation capacity is fully occupied, i.e., $\sum_{k=1}^{K} C_k \ell_k^* \!=\! F.$}
\item{\textbf{Step 3}: With $\{\ell_k^{*}\}$ derived in Step 2, search for $\lambda^{*}$ such that  $t_k^{*}\!\!=\!\! \dfrac{\ell_k^{*}\ln 2}{B[W_0(\frac{\lambda^{*} h_k^2/\beta_k-N_0}{N_0 e})+1]}$ satisfying $\sum_{k=1}^K t_k^{*}\!\!=\!\!T$.
}
\end{itemize}
  \end{algorithm}

\section{Simulation Results}

The simulation settings are  as follows unless specified  otherwise.  The  MECO system comprises $K=30$ mobiles with equal fairness weight  factors, namely that $\beta_k=1$ for all $ k$ such that the weighted sum mobile energy consumption represents the total mobile energy consumption. The time slot  $T\!=\!100$ ms and channels are modeled as   independent Rayleigh fading   with average power loss set as $10^{-6}$. In addition, the variance of complex white Gaussian channel noise is $N_0\!\!=\!\!10^{-9}$ W and the bandwidth $B\! =\!10$ Mhz. Consider mobile $k$. The  CPU computation capacity $F_k$ is  uniformly selected  from the set $\{0.1, 0.2, \cdots, 1.0\}$ Ghz  and the local computing energy per cycle $P_k$ follows a uniform distribution in the range $(0, 20\times 10^{-11})$  J/cycle. For the computing task, both the data size and required number of  CPU cycles per bit follow the uniform distribution with $R_k \!\in\! [100, 500]$ KB and $C_k \!\in\! [500, 1500]$ cycles/bit. All random variables are  independent for different mobiles, modeling heterogeneous mobile computing capabilities.  Last, the cloud computation capacity is set as $F=6\times 10^9$ cycles per slot.

For performance comparison, a  baseline \emph{equal resource-allocation} policy is considered, which  allocates equal offloading duration for mobiles satisfying $\upsilon_k > 1$ and based on this, the offloaded data sizes are optimized.

Fig.~\ref{Fig:egy_vs_T} shows the curves of total mobile energy consumption versus the time slot  duration $T$. Several observations can be made. First, the total mobile energy consumption reduces  as the slot duration grows. Next,  the sub-optimal policy computed using Algorithm~\ref{Alg:WLA:Sopt} is found to  have close-to-optimal performance and yields total mobile energy consumption  less than half of that for  the equal resource-allocation policy. The energy reduction is more significant for a shorter slot duration since without the optimization on fractions of slot, the offloading energy of baseline policy grows exponentially with the decrease of allocated time fractions.

 The curves of total mobile energy consumption versus the cloud computation capacity are displayed in Fig.~\ref{Fig:egy_vs_capacity}. It can be observed that the performance of the sub-optimal policy  approaches to that of the optimal one when the cloud computation capacity increases and achieves  substantial energy savings gains over the equal resource-allocation policy. Furthermore, the total mobile energy consumption is invariant after the cloud computation capacity exceeds some  threshold (about $6\times 10^9$). This suggests that   there exists some critical value for the cloud computation capacity,  above which   increasing the capacity yields no reduction on the total mobile energy consumption. 
 \begin{figure}[t!]
\begin{center}
\includegraphics[width=8cm]{./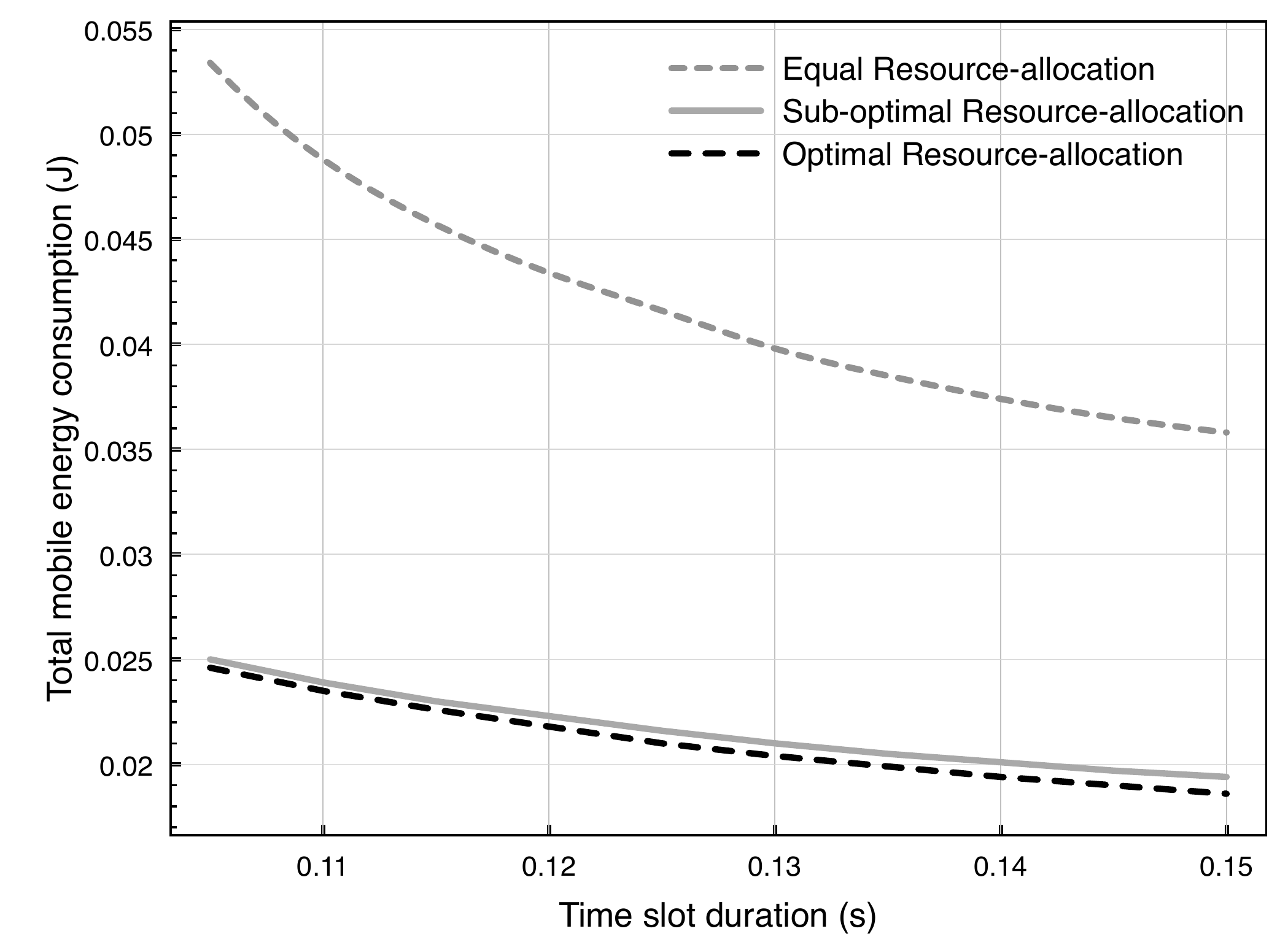}
\caption{Total mobile energy consumption vs. time slot duration.}
\label{Fig:egy_vs_T}
\end{center}
\end{figure}
 \begin{figure}[t!]
\begin{center}
\includegraphics[width=8cm]{./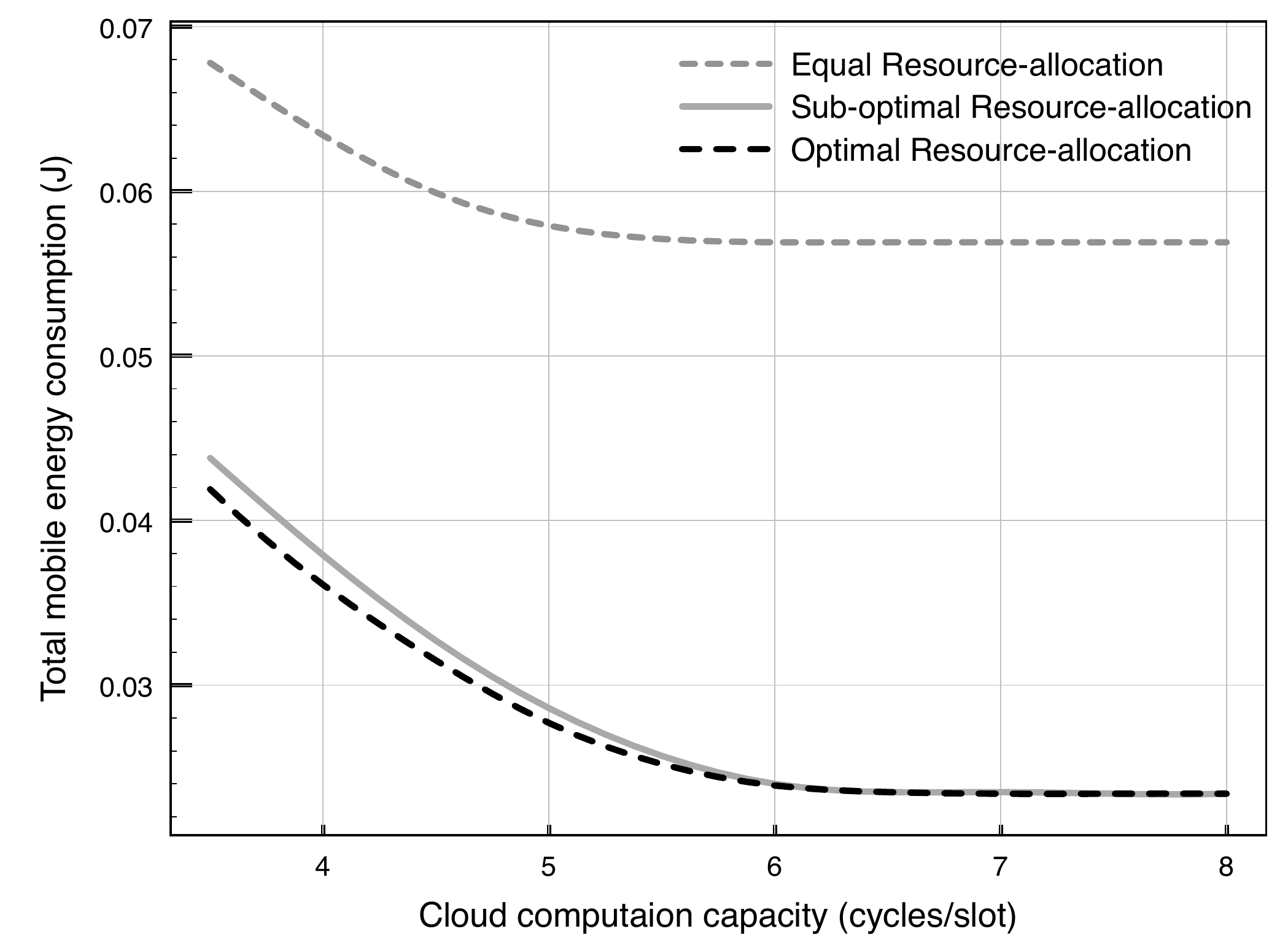}
\caption{Total mobile energy consumption vs. cloud computation capacity.}
\label{Fig:egy_vs_capacity}
\end{center}
\end{figure}

%

\section{conclusion}
Consider a  multiuser MECO system based on TDMA. This work shows that the optimal energy-efficient resource-allocation policy for clouds with infinite or finite computation capacities, is featured with a threshold-based structure. Specifically, the BS makes a binary offloading decision for each mobile, where users with priorities above or below a given threshold will perform complete or minimum offloading. Moreover,  a simple sub-optimal algorithm is proposed to reduce the complexity for computing the threshold.
\appendix

\subsection{Proof of Lemma~\ref{Lem:ConvexProblem}}\label{App:ConvexProblem}
Since $f(x)$ is a convex function, its perspective function, defined as $t_k f(\frac{\ell_k}{t_k})$, is still convex. 
Thus, the objective function, the summation of a set of convex functions, preserves the convexity. Combining it with the linear convex constraints  leads to the desired result. \hfill $\blacksquare$
%

\subsection{Proof of Lemma~\ref{Lem:FeasiTypeI}}\label{App:FeasiTypeI}
Whether Problem P1 is feasible depends on the following two key constraints: $\sum_{k=1}^K C_k \ell_k \le F$ and $m_k^+ \le \ell_k \le R_k$. Assume $m_k^+ \le \ell_k \le R_k$ is satisfied. Then it has 
\begin{equation*}
\sum_{k=1}^K C_k m_k^+ \le \sum_{k=1}^K C_k \ell_k \le \sum_{k=1}^K C_k R_k.
\end{equation*}
Thus, only when $\sum_{k=1}^K C_k m_k^+ \le F$, Problem P1 is feasible.\hfill $\blacksquare$

\subsection{Proof of Lemma~\ref{Lem:PriStrongLA}}\label{App:PriStrongLA}
First, we derive a general result that is the root of equation: $f^{'-1}\!\l(p \r)=g^{-1}\!\l( y \r)$ with respect to $y$ as follows. 

According to the definitions of $f(x)$ and $g(x)$, it has 
\begin{equation}\label{Eq:Ffunction}
f^{'}(x)=\dfrac{N_0 \ln 2}{B} 2^{\frac{x}{B}} ~~\text{and}~~ f^{'-1}(y)=B\log_2\l(\dfrac{By}{N_0 \ln 2}\r).
\end{equation} Thus, the solution for the general equation is
\begin{align}\label{Eq:general}
y&=g(f^{'-1}(p))=f(f^{'\!\!-1}(p))-f^{'\!-\!1}(p) \times f^{'}\!\!(f^{'\!-\!1}(p)) \nn \\ 
&=f(f^{'-1}(p))- f^{'-1}(p) \times p \nn\\
&=\dfrac{Bp}{\ln 2}-N_0-pB\log_2\l(\dfrac{Bp}{N_0 \ln 2}\r).
\end{align}

Note that to ensure $\ell_k^{*(2)}\ge 0$ in Problem P1, we need $f^{'-1}(C_kP_k h_k^2)\!\ge\! 0$, which is equivalent to $v_k \!\ge\! 1$ derived from \eqref{Eq:Ffunction}. Then, by substituting $p=C_k P_k h_k^2$ and $y=\frac{-h_k^2 x}{\beta_k}$ to \eqref{Eq:general} and making arithmetic operations  gives the desired result.
\hfill $\blacksquare$


\subsection{Proof of Theorem~\ref{Theo:OptiPolicyP2}}\label{App:OptiPolicyP2}
First, to prove this theorem, we need the following lemmas.
\begin{lemma}\label{Lem:GInverse}\emph{The function $g^{-1}(y)$ can be expressed as 
\begin{equation}\label{Eq:Ginverse}
g^{-1}(y)=\frac{B \l[W_0(\frac{y+N_0}{-N_0 e})+1\r]}{\ln 2}.
\end{equation}}
\end{lemma}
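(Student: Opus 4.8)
The plan is to obtain $g^{-1}$ by explicitly inverting the scalar equation $y = g(x)$. First I would write $g$ in closed form: using $f(x)=N_0(2^{x/B}-1)$ together with $f'(x)=\frac{N_0\ln 2}{B}2^{x/B}$ from \eqref{Eq:Ffunction}, the definition $g(x)=f(x)-xf'(x)$ collapses to
\begin{equation*}
g(x) = N_0\, 2^{x/B}\l(1 - \frac{x\ln 2}{B}\r) - N_0 .
\end{equation*}
Setting $y=g(x)$ and isolating the exponential factor gives $\frac{y+N_0}{N_0}=2^{x/B}\l(1-\frac{x\ln 2}{B}\r)$, which is the equation I must solve for $x$.

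Next I would reduce this to the canonical Lambert form $s\,e^{s}=z$. Introducing $w=\frac{x\ln 2}{B}$ so that $2^{x/B}=e^{w}$ turns the equation into $\frac{y+N_0}{N_0}=e^{w}(1-w)$. The decisive manipulation is the shift $s=w-1$: then $e^{w}(1-w)=e^{s+1}(-s)=-e\,s\,e^{s}$, so that
\begin{equation*}
s\,e^{s} = \frac{y+N_0}{-N_0 e}.
\end{equation*}
By the defining property of the Lambert function this yields $s=W_0\l(\frac{y+N_0}{-N_0 e}\r)$, and unwinding $s=w-1=\frac{x\ln 2}{B}-1$ solves for $x$ and reproduces exactly \eqref{Eq:Ginverse}.

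The one point genuinely requiring care — and the main obstacle — is the branch selection, since $s\,e^{s}=z$ admits two real roots whenever $z\in[-1/e,0)$, corresponding to the branches $W_0$ and $W_{-1}$. Here $g^{-1}$ is needed only on the domain where offloading occurs, i.e. for argument $x\ge 0$ (equivalently $\upsilon_k\ge 1$, as in the proof of Lemma~\ref{Lem:PriStrongLA}), so $w=\frac{x\ln 2}{B}\ge 0$ and hence $s=w-1\ge -1$. This pins down the principal branch, whose range is $[-1,\infty)$, forcing $W_0$ rather than $W_{-1}$. It is also worth noting that $g$ is strictly decreasing on $x\ge 0$, since $g'(x)=-x f''(x)<0$ for $x>0$ with $f''>0$; this guarantees that the inverse is single-valued and that the expression in \eqref{Eq:Ginverse} is the genuine functional inverse rather than merely a root of $y=g(x)$.
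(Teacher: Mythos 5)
Your proof is correct and follows essentially the same route as the paper: write $g(x)$ in closed form, recast $y=g(x)$ as $s\,e^{s}=\frac{y+N_0}{-N_0e}$ via the substitution $s=\frac{x\ln 2}{B}-1$, and invert with the Lambert function. Your explicit justification of the principal-branch choice (from $x\ge 0 \Rightarrow s\ge -1$) and of single-valuedness via $g'(x)=-xf''(x)<0$ is a welcome tightening of a step the paper leaves implicit.
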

\begin{proof}
Since $g^{-1}(y)$ denotes the root of equation $g(x)=y$ for $x \ge 0$, it has 
\vspace{-5pt}
\begin{align*}
y&=g(x)=\l[N_0- \frac{(\ln 2) N_0 x}{B}\r] \times 2^{\frac{x}{B}}-N_0 \\
&= (-N_0 e) \times \l[ \l(\frac{x\ln2 }{B}-1\r) e^{\frac{x \ln2}{B}-1}\r] -N_0.
\end{align*}
Thus, based on the definition for Lambert function, we have $\dfrac{x\ln2 }{B} \!-1\!=\!W_0\!\!\l(\dfrac{y+N_0}{-N_0 e}\r)$. Then the desired result follows.
\end{proof}

\begin{lemma}\label{Lem:IncreaseG}\emph{
The function $g^{-1}(y)$ is a monotone decreasing function for $y<0$.}
\end{lemma}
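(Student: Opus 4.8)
The plan is to exploit the inverse-function relationship between $g$ and $g^{-1}$: a continuous, strictly monotone function has a continuous inverse with the \emph{same} direction of monotonicity, so $g^{-1}$ will be strictly decreasing exactly when $g$ is strictly decreasing on the corresponding domain. Accordingly, I would first establish that $g$ itself is strictly decreasing on $(0,\infty)$ and then transfer this conclusion to $g^{-1}$, rather than differentiating the Lambert-function expression from Lemma~\ref{Lem:GInverse} directly.

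First I would compute $g'(x)$. Since $g(x)=f(x)-xf'(x)$ by definition, the product rule gives $g'(x)=f'(x)-f'(x)-xf''(x)=-xf''(x)$. Because $f(x)=N_0(2^{x/B}-1)$ is convex, with $f''(x)=\frac{N_0(\ln 2)^2}{B^2}2^{x/B}>0$, it follows that $g'(x)=-xf''(x)<0$ for every $x>0$. Hence $g$ is strictly decreasing on $(0,\infty)$.

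Next I would pin down the range of $g$ in order to identify the claimed domain $y<0$. Evaluating at the endpoint gives $g(0)=f(0)=0$, and since $g$ strictly decreases on $(0,\infty)$ we obtain $g(x)<0$ for all $x>0$. Thus $g$ is a strictly decreasing bijection from $(0,\infty)$ onto $(-\infty,0)$, so its inverse $g^{-1}$, which is precisely the root map defined for $y<0$ in Lemma~\ref{Lem:GInverse}, is itself strictly decreasing, giving the result. There is no serious obstacle here; the only point requiring care is the sign-and-domain bookkeeping, specifically confirming $g(0)=0$ so that the image of $g$ on the positive half-line is exactly the set $y<0$ on which monotonicity is asserted. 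As an alternative, one could differentiate the closed form directly: with $g^{-1}(y)=\frac{B}{\ln 2}\l[W_0\!\l(\frac{y+N_0}{-N_0 e}\r)+1\r]$, the chain rule yields $\frac{d}{dy}g^{-1}(y)=\frac{B}{\ln 2}\,W_0'(\cdot)\cdot\frac{-1}{N_0 e}$, which is negative because $W_0$ is increasing on $[-1/e,\infty)$ while the factor $-1/(N_0 e)$ is negative; this reaches the same conclusion but requires invoking the monotonicity of the principal Lambert branch.
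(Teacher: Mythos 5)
Your proof is correct, but it takes a genuinely different route from the paper. The paper proves the lemma by differentiating (in effect) the closed-form expression $g^{-1}(y)=\frac{B}{\ln 2}\bigl[W_0\bigl(\frac{y+N_0}{-N_0 e}\bigr)+1\bigr]$ from Lemma~\ref{Lem:GInverse}: it observes that for $y\le 0$ the argument $\frac{y+N_0}{-N_0 e}$ lies in $[-1/e,\infty)$, where the principal branch $W_0$ is monotone increasing, while the inner map $y\mapsto\frac{y+N_0}{-N_0 e}$ is decreasing --- this is exactly the ``alternative'' you sketch at the end. Your primary argument instead works with $g$ itself: $g'(x)=-xf''(x)<0$ for $x>0$ by convexity of $f$, together with $g(0)=0$ and $g(x)\to-\infty$, shows $g$ is a strictly decreasing bijection of $(0,\infty)$ onto $(-\infty,0)$, and the inverse of a strictly decreasing continuous bijection is strictly decreasing. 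Your route is more elementary and self-contained (it needs no Lambert-function facts and does not depend on Lemma~\ref{Lem:GInverse} at all), and it has the side benefit of identifying the range of $g$ precisely, which justifies why $g^{-1}$ is defined exactly on $y<0$; the paper's route is shorter given that the Lambert-function representation has already been derived and is needed anyway for the expression of $t_k^{*(2)}$ in Theorem~\ref{Theo:OptiPolicyP2}. Both are sound.
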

\begin{proof}
From \eqref{Eq:Ginverse}, for $y\le 0$, it has $\frac{y+N_0}{-N_0 e} \ge -1/e$. Since the single-valued Lambert function $W_0(x)$ is monotone increasing for $x\ge -1/e$, we can easily obtain the desired result.
\end{proof}

Then, consider case 1) in Theorem~\ref{Theo:OptiPolicyP2}. Note that  for mobile $k$, if  $m_k^+=0$ and $\upsilon_k\le 1$, it results in $\ell_k^{*(2)}=0$ derived from \eqref{Eq:OptL}. Thus, if these two conditions are satisfied for all $k$, it leads to $\ell_k^{*(2)}=t_k^{*(2)}=0$. For case 2), if there exists  mobile $k$ such that $\upsilon_k >1$ or $m_k^+>0$, it ensures $\ell_k^{*(2)}>0$. And the time-sharing constraint should be active since remaining time can be used for offloading so as to reduce the transmission energy. 
Moreover, consider each user $k=1, 2, \cdots K$. If $\upsilon_k \ge 1$, from \eqref{Eq:OptL} and \eqref{Eq:OptT}, $\{\ell_k^{*(2)}, t_k^{*(2)}\}$ should satisfy the following:
\begin{subequations}
\begin{align}
\dfrac{\ell_k^{*(2)}}{ t_k^{*(2)}}&=\min\l\{ \max \l[\dfrac{m_k^+}{t_k^{*(2)}}, f^{'-1}\!\l(C_k P_k h_k^2 \r)\r], \dfrac{R_k}{t_k^{*(2)}}\r\} \label{Eq:TL1} \\
&=\max \l\{ \dfrac{m_k^+}{t_k^{*(2)}}, \min \l[f^{'-1}\!\l(C_k P_k h_k^2 \r), \dfrac{R_k}{t_k^{*(2)}}\r]\r\} \label{Eq:TL2} \\
&=g^{-1}\!\l(\frac{-h_k^2 \lambda^*}{\beta_k} \r). \label{Eq:TL3}
\end{align}
\end{subequations}
Using Lemma~\ref{Lem:PriStrongLA} and Lemma~\ref{Lem:IncreaseG}, we have the following:
\begin{enumerate}
\item{If $\phi_k>\lambda^*\ge 0$, it has $-h_k^2 \phi_k<-h_k^2 \lambda^*\le 0$.  Then,  from \eqref{Eq:TL1}, it gives 
 \begin{align}\label{Eq:HighPri}
 \max & \l[\dfrac{m_k^+}{t_k^{*(2)}}, f^{'-1}\!\l(C_k P_k h_k^2 \r)\r] \ge f^{'-1}\!\l(C_k P_k h_k^2\r) \nn \\
 &=g^{-1}\!\l(\frac{-h_k^2 \phi_k}{\beta_k} \r)>g^{-1}\!\l(\frac{-h_k^2 \lambda^*}{\beta_k} \r).
 \end{align} From \eqref{Eq:TL1}, \eqref{Eq:TL3} and \eqref{Eq:HighPri}, it follows that
$\ell_k^{*(2)}=R_k$.}
\item{If $\phi_k=\lambda^*$, it has $f^{'-1}\!\l(C_k P_k h_k^2\r)=g^{-1}\!\l(\frac{-h_k^2 \lambda^*}{\beta_k} \r)$.} 
\item{If $0\le \phi_k<\lambda^*$, it has $-h_k^2 \phi_k>-h_k^2 \lambda^*$. Combining it with \eqref{Eq:TL2} leads to 
 \begin{align}\label{Eq:LowPri}
 \min& \l[ f^{'-1}\!\l(C_k P_k h_k^2 \r),\dfrac{R_k}{t_k^{*(2)}},\r] \le f^{'-1}\!\l(C_k P_k h_k^2\r) \nn \\
 &=g^{-1}\!\l(\frac{-h_k^2 \phi_k}{\beta_k} \r)<g^{-1}\!\l(\frac{-h_k^2 \lambda^*}{\beta_k} \r).
 \end{align} From \eqref{Eq:TL2}, \eqref{Eq:TL3} and \eqref{Eq:LowPri}, it follows that
$\ell_k^{*(2)}=m_k^+$.}
\end{enumerate}
Furthermore, if $\upsilon_k\!<\!1$, it has $\ell_k^{*(2)}\!=\!m_k^+$. Note that this case can be included in the scenario of $\phi_k\!<\!\lambda^*$ with the definition of $\phi_k$ in \eqref{Eq:OffPriority}.

Last, from \eqref{Eq:TL3}, it follows that $$t_k^{*(2)}=\dfrac{\ell_k^{*(2)}}{g^{-1}\!\l(\frac{-h_k^2 \lambda^*}{\beta_k} \r)}\overset{(a)}{=}\dfrac{\ell_k^{*(2)} \ln 2}{B\l[W_0(\frac{\lambda^* h_k^2/\beta_k-N_0}{N_0 e})+1\r]}$$ where $(a)$ is derived using Lemma~\ref{Lem:GInverse}, completing the proof. \hfill $\blacksquare$

\subsection{Proof of Lemma~\ref{Lem:ProMu}}\label{App:Lem:ProMu}
If there exists offloading mobile $k$, it must satisfy $\lambda^{*} \le \tilde{\phi}_k$ and $1 \le \tilde{\upsilon}_k$. Thus, considering all mobiles, it follows $\lambda^{*} \le \max_k\{\tilde{\phi}_k\}=\lambda_{\max}$ and $1 \le \max_k \{\frac{B C_k  (P_k-\mu^*) h_k^2 }{N_0\ln2 }\}.$ The latter condition is equivalent to $\mu^*\le \mu_{\max}$, completing the proof.\hfill $\blacksquare$


\end{document}